\documentclass[a4paper,12pt]{article}
\usepackage{graphicx} 
\usepackage{amsmath, amssymb, amsthm}
\usepackage{nccmath}
\usepackage{fullpage}
\usepackage{natbib}
\usepackage{titlesec}
\usepackage{listings}
\usepackage{xcolor}
\usepackage[toc]{appendix}
\usepackage{multirow}
\usepackage{comment}
\usepackage{subcaption}
\usepackage{siunitx}
\usepackage{hyperref} 
\hypersetup{colorlinks = true, citecolor = black, urlcolor = cyan, linkcolor = black}
\newcommand{\E}{\mathbb{E}}
\newtheorem{theorem}{Theorem}[section]

\newtheorem{remark}[theorem]{Remark}

\usepackage{xr-hyper}
\usepackage{hyperref}

\hypersetup{
  colorlinks=true,
  linkcolor=blue,
  citecolor=blue,
  filecolor=blue,
  urlcolor=blue,
}

\definecolor{codegreen}{rgb}{0,0.6,0}
\lstdefinestyle{code}{
	commentstyle=\color{codegreen}
}
\begin{document}

\title{\bf Generalised Exponential Kernels for Nonparametric Density Estimation}
\author{Laura M. Craig\footnote{\texttt{email}: \href{mailto:laura.craig@ucdconnect.ie}{laura.craig@ucdconnect.ie}}\,\,\, and Wagner Barreto-Souza\footnote{\texttt{email}: \href{mailto:wagner.barreto-souza@ucd.ie}{wagner.barreto-souza@ucd.ie}}\\
\small \it School of Mathematics and Statistics, University College Dublin, Belfield, Republic of Ireland}

\maketitle

\begin{abstract}
This paper introduces a novel kernel density estimator (KDE) based on the generalised exponential (GE) distribution, designed specifically for positive continuous data. The proposed GE KDE offers a mathematically tractable form that avoids the use of special functions, for instance, distinguishing it from the widely used gamma KDE, which relies on the gamma function. Despite its simpler form, the GE KDE maintains similar flexibility and shape characteristics, aligning with distributions such as the gamma, which are known for their effectiveness in modelling positive data. We derive the asymptotic bias and variance of the proposed kernel density estimator, and formally demonstrate the order of magnitude of the remaining terms in these expressions. We also propose a second GE KDE, for which we are able to show that it achieves the optimal mean integrated squared error, something that is difficult to establish for the former. Through numerical experiments involving simulated and real data sets, we show that GE KDEs can be an important alternative and competitive to existing KDEs.
\end{abstract}
{\it \textbf{Keywords}:} Bandwidth parameter, Bias, Continuous positive data, Consistency, Integrated mean squared error.\\

\section{Introduction}
The classic kernel density estimator (KDE) used in nonparametric statistics was introduced by \cite{rosenblatt} and \cite{parzen} to estimate density function with support on the real line $\mathbb R$. Let $X_1,\ldots,X_n$ be an iid (independent and identically distributed) sample from a distribution with density function $f:\mathbb R\rightarrow\mathbb R^+\equiv(0,\infty)$, then the KDE of $f$ assumes the form
\begin{eqnarray*}
\widehat f(x)=\dfrac{1}{n}\sum_{i=1}^n \dfrac{1}{h}K\left(\dfrac{x-X_i}{h}\right),\quad x\in\mathbb R,
\end{eqnarray*}
where $K(\cdot)$ is a density (kernel) function symmetric around zero (for instance, a standard normal one), and $h$ is the bandwidth parameter. Under the conditions that the density $f$ is three times differentiable with bounded third derivatives, and $K$ has finite third moment, we have that the bias and variance of $\widehat f$ are given respectively by
\begin{eqnarray*}
\mbox{bias}\left(\widehat f(x)\right)=\dfrac{1}{2}h^2f''(x)\int_{-\infty}^\infty y^2K(y)dy+o(h^2),
\end{eqnarray*}
as $h\rightarrow0$, and 
\begin{eqnarray*}
\mbox{Var}\left(\widehat f(x)\right)=\dfrac{1}{nh}f(x)\int_{-\infty}^\infty K^2(y)dy+o\left(\dfrac{1}{nh}\right),
\end{eqnarray*}
as $nh\rightarrow\infty$; for instance, see Theorem 6.4.3 from \cite{lehmann}. Since the bias and variance formulas for symmetric kernel density estimators (KDEs) with support on $\mathbb R$ hold under fairly general assumptions, different kernels tend to perform similarly. As a result, bandwidth selection is typically more critical than the choice of kernel.

KDEs defined on the entire real line $\mathbb R$ suffer from an issue known as ``boundary bias" when applied to nonnegative data. This arises because the estimator assigns a nonzero probability to negative values, even though the true data are constrained to be nonnegative. As a result, density estimates near the boundary (i.e., close to zero) become less accurate and more biased \citep{chen}.

Several studies have proposed the use of asymmetric kernels as a modification of the traditional Parzen–Rosenblatt kernel density estimator to better accommodate nonnegative continuous data.
The general form of an asymmetric KDE is
\begin{align*}
	\widehat f (x) = \frac{1}{n} \sum_{i=1}^{n} K^F_{\left(a(x,b),c(b)\right)}(X_{i}),
\end{align*}
where \(X_{1}, \hdots, X_{n}\) are independent and identically distributed (iid) random variables with true density function $f$ with support on $\mathbb R^+$, $K^F_{\left(a(x,b),c(b)\right)}(\cdot)$ is an asymmetric kernel from a distribution $F$ parameters with $a(x,b)$ and $c(b)$ as a function of $(y,b)$ and $b$, respectively, and $b$ denotes the smoothing (bandwidth) parameter.

One of the first attempts to introduce an asymmetric KDE is due to \cite{chen} with a gamma kernel given by
 \begin{align*}
 	K^{GA}_{\left(x/b+1,b\right)}(z) = \frac{z^{x/b}\exp\left\{-z/b\right\}}{b^{x/b + 1}\Gamma\left(x/b + 1\right)}, \quad z>0,
 \end{align*}
which will be refereed as Gam1 KDE along with this paper. To remove the dependence of the bias on the first derivative of $f$ in the interior (i.e., away from $x=0$), \cite{chen} also proposed a second gamma kernel given by $K^{GA}_{\left(\rho_b(x),b\right)}(z)$, with $\rho_b(x)=x/b$ for $x\geq2b$ and $\rho_b(x)=\dfrac{1}{4}(x/b)^2+1$ otherwise. This second gamma KDE will be refereed as Gam2. \cite{scaillet} introduced two alternatives to the gamma kernel based on the inverse-Gaussian (IG) and reciprocal inverse-Gaussian (RIG) distributions, with respective kernels  
 \begin{align*}
 	K^{IG}_{\left(x, 1/b\right)}(z) = \frac{1}{\sqrt{2 \pi b z^3}} \exp \left\{- \frac{1}{2bx}\left(\frac{z}{x} - 2 + \frac{x}{z}\right)\right\},\quad z>0,
 \end{align*}
and
 \begin{align*}
 	K^{RIG}_{\left(1/(x-b), 1/b\right)}(z) = \frac{1}{\sqrt{2 \pi b z}} \exp \left\{- \frac{x-b}{2b}\left(\frac{z}{x-b} - 2 + \frac{x-b}{z}\right)\right\},\quad z>0.
 \end{align*}

Other asymmetric KDEs have been proposed based on the lognormal \citep{bs-ln}, Birnbaum-Saunders \citep{bs-ln,maretal2013,kak2021}, inverse gamma \citep{inv-gamma}, generalised gamma \citep{hirsak2015,general-gamma}, beta prime \citep{beta-prime}, and multivariate elliptical-based Birnbaum-Saunders \citep{kak2022} distributions. The estimation of the first-order derivative of density functions with support on $\mathbb R^+$ has been recently addressed by \cite{funhir2024}. Alternative methods have been proposed by \cite{geewan2018} and \cite{gee2021} based on the local-likelihood transformation and Mellin-Meijer KDEs, respectively.

This paper aims to contribute to the growing body of work on asymmetric kernel density estimators (KDEs) for positive continuous data by introducing a novel KDE based on the generalised exponential (GE) distribution. The proposed GE-based kernel offers a mathematically tractable form, free from special functions such as the gamma function that is involved in the widely used gamma KDE. Despite the gamma KDE’s popularity and versatility in handling non-negative data, the GE KDE presents a simpler alternative while retaining similar flexibility. By leveraging the properties of the GE model, which shares the same general shape as the density and hazard functions of the gamma distribution, this new kernel provides an efficient and accessible option for density estimation.  A second GE KDE is also proposed, for which we are able to show that it achieves the optimal mean integrated squared error, something that is difficult to establish for the former.

The motivation for developing the GE KDEs also lies in the fact that different asymmetric kernels may yield distinct asymptotic properties for bias and variance. This variability highlights the importance of expanding the toolkit of kernels specifically designed for positive data. The GE KDEs provide an appealing alternative for both theoretical analysis and practical implementation. This paper explores their properties, compares their performance to existing methods, and argues for their utility as strong competitors to existing asymmetric kernels. We derive the asymptotic bias and variance of the proposed kernel density estimators, and formally demonstrate the order of magnitude of the remaining terms in these expressions, which is not always addressed in existing papers on the topic. Moreover, through numerical experiments involving simulated and real data sets, we show that GE KDEs can be an important alternative and competitive to existing KDEs.

The paper is organised as follows. Section \ref{sec:GE} introduces the generalised exponential kernel and provides some theoretical results. Section \ref{sec:MC} focuses on Monte Carlo experiments to compare the finite sample performance of the generalised exponential KDE and its existing competitors. A second GE KDE is proposed and explored in Section \ref{sec:GE2}. Section \ref{sec:app} applies the generalised exponential KDEs to two real data sets and compares them to existing KDEs such as the gamma ones by \cite{chen}. Section \ref{sec:conclusion} offers some conclusions on the topic.

\section{Generalised exponential kernels}\label{sec:GE}

The generalised exponential distribution and its properties have been studied by \cite{ge}. A generalised exponential (GE) random variable \(X\) has a density function of the form
\begin{align}
	\label{eq:ge_pdf}
	K^{\mbox{GE}}_{(\alpha, 1/\lambda)}(z) = \alpha \lambda \left(1 - \exp\left\{-\lambda z\right\}\right)^{\alpha - 1} \exp\left\{-\lambda z\right\},\ z > 0,
\end{align}
for \(\alpha, \lambda > 0\). The particular case of interest here for the first GE KDE is when \(\alpha > 1\), as the density function will be unimodal. In this case, the mode is at \(\lambda^{-1}\log\alpha\). 
For \(X \sim \mbox{GE}(\alpha, \lambda\)), the expected value and variance of \(X\) are
\begin{align*}
	\E\left(X\right) = \frac{1}{\lambda} \left[\psi(\alpha + 1) - \psi(1)\right]\quad\mbox{and}\quad \mbox{Var}\left(X\right) = \frac{1}{\lambda^2} \left[\psi '(1) - \psi '(\alpha + 1)\right],
\end{align*} 
where $\psi(z)=d\log\Gamma(x)/dx$ and $\psi'(z)=d^2\log\Gamma(x)/dx^2$ are the digamma and trigamma functions, respectively. The moment generating function of $X$, say $\Phi(t)=E(e^{tX})$, assumes the form
\begin{eqnarray}\label{mgf}
\Phi(t)=\dfrac{\Gamma(\alpha+1)\Gamma(1-t/\lambda)}{\Gamma(\alpha-t/\lambda+1)},
\end{eqnarray}
for $t<\lambda$; see Eq. (5) from \cite{ge}.

Let $X_1,\ldots,X_n$ be an iid sequence of continuous positive random variables with density function $f(x)$. We propose our KDE in terms of the $K^{GE}_{(e^{x/b},b)}(x)$ density function, which has a mode at $x$ with bandwidth parameter $b=1/\lambda$. The first generalised exponential KDE then takes the form
\begin{align}
	\label{eq:ge_final}
	\widehat f_{GE}(x) = \mfrac{1}{n} \sum_{i=1}^{n} K^{GE}_{(e^{x/b},b)}(X_i),\quad x>0.
\end{align}

In what follows, we obtain the asymptotic behaviour of the bias and variance of the proposed GE KDE for the interior and boundary $x$ cases, that are respectively $x/b\rightarrow\infty$ and $x/b\rightarrow c$ as $b\rightarrow0$, where $c$ is a positive real constant.

\begin{theorem}\label{thm:bias} Assume that $f(\cdot)$ is a three-times differentiable function with bounded third derivative. Then, 
\begin{eqnarray*}
	\mbox{bias}\left(\widehat f_{GE}(x)\right) &\equiv& \E\left(\widehat f_{GE}(x)\right) - f(x)\nonumber\\ 
&=&\begin{cases}
b\gamma f'(x) +\frac{1}{2}(\gamma^2+\pi^2/6)b^2f''(x)+o(b^2),\quad\mbox{for} \quad x/b\rightarrow\infty,\\
b[\psi(e^c+1)+\gamma-c]f'(0)+o(b),\quad\mbox{for} \quad x/b\rightarrow c,
\end{cases}    
\end{eqnarray*} 
as $b\rightarrow0$, where $\gamma\approx0.577216$ is the Euler's constant.
\end{theorem}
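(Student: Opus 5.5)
We will write $\E(\widehat f_{GE}(x))=\E f(\xi_x)$, where $\xi_x\sim\mbox{GE}(\alpha,\lambda)$ with $\alpha=e^{x/b}$ and $\lambda=1/b$; this is just the statement that $\E K^{GE}_{(e^{x/b},b)}(X_1)=\int_0^\infty K^{GE}_{(e^{x/b},b)}(z)f(z)\,dz$. The moments of $\xi_x$ come from the known mean and variance formulas:
\begin{eqnarray*}
\E(\xi_x)&=&b[\psi(e^{x/b}+1)-\psi(1)]=b[\psi(e^{x/b}+1)+\gamma],\\
\mbox{Var}(\xi_x)&=&b^2[\psi'(1)-\psi'(e^{x/b}+1)]=b^2[\pi^2/6-\psi'(e^{x/b}+1)].
\end{eqnarray*}
We then Taylor expand $f(\xi_x)$ around $x$ up to second order, with a third-order remainder controlled by $\sup|f'''|$:
\begin{eqnarray*}
\E f(\xi_x)-f(x)=f'(x)\,\E(\xi_x-x)+\tfrac12 f''(x)\,\E(\xi_x-x)^2+R,\qquad |R|\le \tfrac16\sup|f'''|\,\E|\xi_x-x|^3.
\end{eqnarray*}

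\textbf{Interior case} ($x/b\to\infty$). We use the expansions $\psi(z+1)=\log z+O(1/z)$ and $\psi'(z+1)=O(1/z)$ with $z=e^{x/b}$. These give
\begin{eqnarray*}
\E(\xi_x-x)=b\gamma+O(be^{-x/b}),\qquad \E(\xi_x-x)^2=b^2(\gamma^2+\pi^2/6)+O(b^2e^{-x/b}).
\end{eqnarray*}
Since $e^{-x/b}$ decays faster than any power of $b/x$, both error terms are $o(b^2)$. This produces the stated first two terms $b\gamma f'(x)+\frac12(\gamma^2+\pi^2/6)b^2f''(x)$.

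\textbf{Main obstacle: the remainder.} We need $\E|\xi_x-x|^3=o(b^2)$, and we will in fact show it is $O(b^3)$. The cleanest route is through the moment generating function (\ref{mgf}). Write $Y=(\xi_x-x)/b$. Then $\E e^{sY}=e^{-sx/b}\Gamma(\alpha+1)\Gamma(1-s)/\Gamma(\alpha+1-s)$ for $s<1$. By Stirling, $\Gamma(\alpha+1)/\Gamma(\alpha+1-s)\sim\alpha^{s}=e^{sx/b}$, uniformly for $s$ in a compact subset of $(-1,1)$, as $\alpha\to\infty$. Hence the MGF of $Y$ converges to $\Gamma(1-s)$, the MGF of a standard Gumbel variable, and stays uniformly bounded on $|s|\le 1/2$. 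Uniform boundedness gives $\sup\E|Y|^3<\infty$ (equivalently, $\E Y^4$ bounded via the fourth cumulant $b^4\psi'''$-type terms). Therefore $\E|\xi_x-x|^3=O(b^3)=o(b^2)$. Alternatively, one could bound the third central moment directly from the polygamma cumulants $\kappa_3=b^3[\psi''(\alpha+1)-\psi''(1)]$ and the fourth cumulant, then apply Lyapunov.

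\textbf{Boundary case} ($x/b\to c$). Here we expand only to first order: $\E f(\xi_x)-f(x)=f'(x)\E(\xi_x-x)+O(\E(\xi_x-x)^2)$. Now
\begin{eqnarray*}
\E(\xi_x-x)=b[\psi(e^{x/b}+1)+\gamma-x/b]=b[\psi(e^c+1)+\gamma-c]+o(b)
\end{eqnarray*}
by continuity of $\psi$. The second moment is $O(b^2)$ because $\psi'$ is bounded on $[1,\infty)$. Finally, since $x=O(b)$, we have $f'(x)=f'(0)+O(b)$, which yields $b[\psi(e^c+1)+\gamma-c]f'(0)+o(b)$.
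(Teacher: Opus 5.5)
Your argument is correct, but it is organised differently from the paper's.

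\textbf{Centring of the Taylor expansion.} The paper expands $f$ about the kernel mean $\mu_x=\E(\zeta_x)$. This gives $f(\mu_x)-f(x)+\frac12\mbox{Var}(\zeta_x)f''(\mu_x)$ plus a cubic remainder. The paper then has to expand $f(\mu_x)-f(x)$ a second time using $\mu_x=x+b\gamma+o(b^2)$, and replace $f''(\mu_x)$ by $f''(x)+O(b)$. You expand about $x$ and read the constants directly off the raw moments $\E(\xi_x-x)$ and $\E(\xi_x-x)^2$, which is shorter.

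\textbf{Control of the remainder.} This is the real difference. The paper proceeds in three steps:
\begin{itemize}
\item it bounds $|\E[(\zeta_x-\mu_x)^3f'''(\omega)]|$ by $M\{\E(\zeta_x-\mu_x)^6\}^{1/2}$ via Cauchy--Schwarz;
\item it writes the sixth central moment as $\kappa_6+15\kappa_4\kappa_2+10\kappa_3^2+15\kappa_2^3$;
\item it uses $\kappa_j=(-b)^j\{\psi^{(j-1)}(1)-\psi^{(j-1)}(e^{x/b}+1)\}=O(b^j)$.
\end{itemize}
You instead use $\Gamma(\alpha+1)/\Gamma(\alpha+1-s)\sim\alpha^s$ to show that the mgf of $(\xi_x-x)/b$ converges to $\Gamma(1-s)$ and stays bounded on $|s|\le 1/2$. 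The paper's route is fully explicit: it needs only polygamma bounds and no uniformity argument. Yours controls moments of every order at once, and it explains $\gamma$ and $\pi^2/6$ as the mean and variance of the limiting standard Gumbel law. Your fallback via $\kappa_3$, $\kappa_4$ and Lyapunov is essentially the paper's argument with the fourth moment in place of the sixth.

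\textbf{Two minor points.}

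In the boundary case, writing the remainder as $O(\E(\xi_x-x)^2)$ tacitly assumes $f''$ is bounded, which is not a hypothesis. Bounded $f'''$ only gives $|f''(\omega)|\le|f''(x)|+M|\xi_x-x|$, so you also need $\E|\xi_x-x|^3=O(b^3)$ there. This does hold, because $\xi_x/b$ has the GE law with shape $e^{x/b}\to e^c$ and unit scale. Alternatively, simply reuse the expansion with cubic remainder from the interior case.

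In the interior case, ``faster than any power of $b/x$'' gives $be^{-x/b}=o(b^2)$ only when $x$ stays bounded away from $0$. If, say, $x/b=\log\log(1/b)$, then the $O(be^{-x/b})f'(x)$ term is not $o(b^2)$. The paper's L'H\^opital step tacitly fixes $x$ in the same way. This is therefore a caveat on the statement itself rather than a defect of your proof relative to the paper's.
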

\begin{proof}
	We have that
	\begin{align}\label{expected_value}
		\E\left(\widehat f_{GE}(x)\right) = \int_{0}^{\infty}  K^{GE}_{\left(e^{x/b}, b\right)} (z) f(z) dz 
		= \E\left(f(\zeta_{x})\right),
	\end{align}
	where \(\zeta_{x} \sim\mbox{GE}(e^{x/b}, b)\). Let us initially consider the interior $x$ case.
    
	We now expand $f(z)$ in Taylor's series around $\mu_x\equiv \E(\zeta_{x})$ (for instance, see Thm 2.5.1 from \cite{lehmann}):
	\begin{align}
		\label{eq:taylor_1}
		f(z) = f(\mu_{x}) + (z - \mu_{x})f'(\mu_{x}) + \mfrac{1}{2} (z - \mu_{x})^2f''(\mu_x)+\mfrac{1}{6} (z - \mu_{x})^3f'''(\omega_{z}),
	\end{align}
	where $\omega_z$ lies between $\mu_x$ and $z$. By replacing (\ref{eq:taylor_1}) in (\ref{expected_value}), we obtain that the bias can be expressed by
\begin{eqnarray}\label{bias_proof}
\mbox{bias}\left(\widehat f_{GE}(x)\right)=f(\mu_{x})-f(x) + \mfrac{1}{2}\mbox{Var}(\zeta_x)f''(\mu_x) +
\mfrac{1}{6}E\left[(\zeta_x-\mu_x)^3f'''(\omega_{\zeta_x})\right],
\end{eqnarray}
with $\mu_{x} = \E(\zeta_{x}) = b \left[\psi\left(e^{x/b} + 1\right) - \psi(1)\right]$.

We use that $\psi(z)=\log z+O(z^{-1})$ as $z\rightarrow\infty$ (for instance, see \cite{polygamma}) and $(e^{x/b}+1)^{-1}=o(b^k)$ as $b\rightarrow0$ for any $k\geq0$ to obtain that 
\begin{eqnarray}\label{aux_bias_proof}
\mu_x&=&b\log(e^{x/b}+1)+o(b^2)-b\psi(1)=x+b\log(1+e^{-x/b})+b\gamma\nonumber\\
&=&x+b\gamma+o(b^2),
\end{eqnarray}
where we used that $\psi(1)=-\gamma$, with $\gamma$ denoting the Euler's constant, and that $$\lim_{b\rightarrow0}\log(1+e^{-x/b})/b=-x\lim_{b\rightarrow0}\frac{1/b^2}{e^{x/b}+1}=0,$$ by using L'Hopital rule, and therefore $b\log(1+e^{-x/b})=o(b^2)$. Hence, by using (\ref{aux_bias_proof}), it follows that
\begin{eqnarray}\label{first_term_bias}
f(\mu_x)-f(x)=b\gamma f'(x)+\frac{1}{2}b^2\gamma^2f''(x)+o(b^2). 
\end{eqnarray}

Another term appearing at the bias expression is $\mbox{Var}(\zeta_x)=b^2[\psi'(1)-\psi'(e^{x/b}+1)]=b^2[\pi^2/6-\psi'(e^{x/b}+1)]$. We now use that $\psi'(z)=O(z^{-1})$ as $z\rightarrow\infty$ (for instance, see \cite{handbook_ctd_frac}) to obtain that $\mbox{Var}(\zeta_x)=b^2\pi^2/6+o(b^2)$. Moreover, $f''(\mu_x)=f''(x)+O(b)$. Therefore,
\begin{eqnarray}\label{midterm}
\mbox{Var}(\zeta_x)f''(\mu_x)=(b^2\pi^2/6+o(b^2))(f''(x)+O(b))=b^2\frac{\pi^2}{6}f''(x)+o(b^2).
\end{eqnarray}

We will now argue that the last term in (\ref{bias_proof}) is $o(b^2)$. The assumption that the third derivative of $f(\cdot)$ is bounded, that is, there is $M>0$ such that $|f'''(z)|\leq M$ for all $z$, is in force. It follows that
\begin{eqnarray*}
|E\left[(\zeta_x-\mu_x)^3f'''(\omega_{\zeta_x})\right]|&\leq& E\left[|\zeta_x-\mu_x|^3|f'''(\omega_{\zeta_x})|\right]\nonumber\\
&\leq& \sqrt{E\left[(\zeta_x-\mu_x)^6\right]}\sqrt{E\left[f'''(\omega_{\zeta_x})^2\right]}\nonumber\\
&\leq&M\sqrt{E\left[(\zeta_x-\mu_x)^6\right]},
\end{eqnarray*}
with Cauchy–Schwarz inequality being used to obtain the second inequality.

We now use a result provided by \cite{kendall} (page 63) that expresses the sixth central moments of a random variable, say $\mu_6$, in terms of cumulants, say $\kappa_j's$, as follows:
\begin{eqnarray}\label{sixthmoment}
\mu_6=\kappa_6+15\kappa_4\kappa_2+10\kappa_3^2+15\kappa_2^3.
\end{eqnarray}

The cumulants of a random variable are obtained by evaluating the derivatives of the cumulant generating function (in short, cgf, which is given by the logarithm of the moment generating function) at zero. By using Eq. (\ref{mgf}), we obtain that the  cgf of $\zeta_x$ is given by
\begin{eqnarray*}
C(t)\equiv \log E(e^{t\zeta_x})=\log\Gamma(1-t b)-\log\Gamma(e^{x/b}-t b+1)+\log\Gamma(e^{x/b}+1),\quad t<1/b.
\end{eqnarray*}

The $j$-th cumulant of $\zeta_x$ ($\kappa_j$) is given by
\begin{eqnarray}\label{cumulants}
\kappa_j=\dfrac{d^jC(t)}{d t^j}\bigg|_{t=0}=(-1)^jb^j\left\{\psi^{(j-1)}(1)-\psi^{(j-1)}(e^{x/b}+1)\right\},\quad j\geq1,
\end{eqnarray}
where $\psi^{(j)}(z)=d^{j-1}\psi(z)/dz^{j-1}$ is the polygamma function. By using the fact that $\psi^{(k)}(z)=O(z^{-k})$ for $k\geq1$, and the expression for cumulants (\ref{cumulants}) in (\ref{sixthmoment}), we immediately obtain that 
$E\left[(\zeta_x-\mu_x)^6\right]=O(b^6)$, and therefore
\begin{eqnarray}\label{remaining}
E\left[(\zeta_x-\mu_x)^3f'''(\omega_{\zeta_x})\right]=o(b^2).
\end{eqnarray}

\noindent Now, we use (\ref{first_term_bias}), (\ref{midterm}), and (\ref{remaining}) in (\ref{bias_proof}) to obtain the desirable result for interior $x$. 

For boundary $x$ ($x/b\rightarrow c$ as $b\rightarrow0$), the bias expression given in (\ref{bias_proof}) still holds.
We have that
$$f(\mu_x)=f\left(b \left[\psi\left(e^{x/b} + 1\right) - \psi(1)\right]\right)=f(0)+ b \left[\psi\left(e^{x/b} + 1\right) - \psi(1)\right] f'(0)+o(b)$$
and
$$f(x)=f(0)+xf'(0)+o(b).$$

Hence,
\begin{eqnarray}\label{firstterm_bound}
f(\mu_x)-f(x)&=&b[\psi(e^{x/b}+1)+\gamma]f'(0)-xf'(0)+o(b)\nonumber\\
&=&b[\psi(e^{c}+1)+\gamma-c]f'(0)+o(b),
\end{eqnarray}
where we have used that $x/b\rightarrow c$ to obtain the second equality. Further, 
\begin{eqnarray}\label{secterm_boundary}
\mbox{Var}(\zeta_x)f''(\mu_x)=\{b^2[\pi^2/6-\psi'(e^c+1)]+o(b^2)\}\{f''(x)+O(b)\}=o(b).
\end{eqnarray}

Finally, the last term of the bias expression is $o(b^2)$ exactly as argued for the interior $x$ case. In other words, (\ref{remaining}) still holds. 
By using (\ref{firstterm_bound}), (\ref{secterm_boundary}), and (\ref{remaining}) in (\ref{bias_proof}), we obtain the asymptotic result for the boundary $x$ case, and this completes the proof of the theorem.
\end{proof}

\begin{theorem}\label{thm:var1} Under the condition of Theorem \ref{thm:bias}, for $nb\rightarrow\infty$ as $b\rightarrow0$, the variance of the GE KDE is 
\begin{eqnarray*}
\mbox{Var}\left(\widehat f_{GE}(x)\right) = 
\begin{cases}
\dfrac{1}{4bn} f(x) + o \left(\dfrac{1}{nb}\right),\quad\mbox{for} \quad x/b\rightarrow\infty,\\
\dfrac{1}{4bn}\dfrac{e^c}{e^c-1/2}f(x) + o \left(\dfrac{1}{nb}\right),\quad\mbox{for} \quad x/b\rightarrow c.
\end{cases}  
\end{eqnarray*}
\end{theorem}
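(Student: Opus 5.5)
The plan follows the standard route for asymmetric kernels: write the variance as
\[
\mbox{Var}\left(\widehat f_{GE}(x)\right)=\frac{1}{n}\E\left[K^{GE}_{(e^{x/b},b)}(X_1)^2\right]-\frac{1}{n}\left[\E\left(\widehat f_{GE}(x)\right)\right]^2 .
\]
By Theorem \ref{thm:bias}, $\E(\widehat f_{GE}(x))=f(x)+O(b)$, so the second term is $O(1/n)=o(1/(nb))$. Everything then reduces to the first term, which should be recognised as a constant times an expectation under a new GE law.

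With $\alpha=e^{x/b}$ and $\lambda=1/b$, squaring the kernel gives
\[
K^2(z)=\alpha^2\lambda^2(1-e^{-\lambda z})^{2\alpha-2}e^{-2\lambda z}.
\]
Write $e^{-2\lambda z}=e^{-\lambda z}\bigl[1-(1-e^{-\lambda z})\bigr]$. This splits $K^2$ into two terms:
\[
K^2(z)=\frac{\alpha^2\lambda}{2\alpha-1}\,K^{GE}_{(2\alpha-1,b)}(z)-\frac{\alpha^2\lambda}{2\alpha}\,K^{GE}_{(2\alpha,b)}(z).
\]
Taking the difference of the constants,
\[
\frac{\alpha^2\lambda}{2\alpha-1}-\frac{\alpha^2\lambda}{2\alpha}=\frac{\alpha\lambda}{2(2\alpha-1)},
\]
and the total mass is $\int K^2=\alpha/\bigl(2b(2\alpha-1)\bigr)$. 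It follows that
\[
\E\left[K^2(X_1)\right]=\frac{\alpha^2}{b(2\alpha-1)}\E f(\eta_1)-\frac{\alpha}{2b}\E f(\eta_2),
\]
where $\eta_1\sim \mbox{GE}(2\alpha-1,b)$ and $\eta_2\sim\mbox{GE}(2\alpha,b)$.

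Next I show that $\E f(\eta_k)=f(x)+o(1)$ in both regimes. The means are $b[\psi(2\alpha)-\psi(1)]$ and $b[\psi(2\alpha+1)-\psi(1)]$. In the interior these equal $x+b\log 2+b\gamma+o(b)\to x$. In the boundary case they are $O(b)\to0$, which is close to $x$ because $x=O(b)$. The variances are $O(b^2)$ by the cumulant formula (\ref{cumulants}). A first-order Taylor expansion, as in the proof of Theorem \ref{thm:bias}, together with boundedness of $f'$ near the relevant point (or simply continuity of $f$ plus Chebyshev), then gives $\E f(\eta_k)\to f(x)$.

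The main obstacle is the leading constant, because the two pieces are individually of order $\alpha/b$ and nearly cancel. If I expand $\E f(\eta_k)=f(x)+O(b)$ separately, the error $\alpha\cdot O(b)/b=O(\alpha)$ is huge in the interior case. So I would not split. Instead I normalise $K^2$ itself into a single density,
\[
g(z)=\frac{2(2\alpha-1)b}{\alpha}K^2(z),
\]
which is a signed mixture with weights $2\alpha/(2\alpha-1)$ and $-1/(2\alpha-1)$. This $g$ is nonnegative, has total mass one, and has the form $\mbox{const}\cdot(1-e^{-z/b})^{2\alpha-2}e^{-2z/b}$. Its moments follow from those of $\eta_1,\eta_2$ with bounded weights: the mean is $x+O(b)$ in the interior (and $O(b)$ at the boundary), and the variance is $O(b^2)$. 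Hence $\int g f=f(x)+o(1)$, and
\[
\E\left[K^2(X_1)\right]=\frac{\alpha}{2b(2\alpha-1)}\bigl(f(x)+o(1)\bigr).
\]
In the interior $\alpha/(2\alpha-1)\to1/2$, giving $\frac{1}{4bn}f(x)$. At the boundary $\alpha\to e^c$, and
\[
\frac{e^c}{2(2e^c-1)}=\frac{1}{4}\cdot\frac{e^c}{e^c-1/2},
\]
which matches the stated constant. Dividing by $n$ and absorbing the $O(1/n)$ term completes the argument.
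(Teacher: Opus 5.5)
Your overall route is the paper's: the normalised density $g\propto(1-e^{-z/b})^{2\alpha-2}e^{-2z/b}$ is exactly the beta exponential law $\mbox{BE}(2e^{x/b}-1,2,b)$ of $\eta_x$. Your constant $\alpha/(2b(2\alpha-1))$ also equals the paper's $e^{2x/b}B(2e^{x/b}-1,2)/b$.

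The gap is in the step you identified as the main obstacle, because the signed-mixture weights are miscomputed. Multiplying $K^2=\frac{\alpha^2}{b(2\alpha-1)}K^{GE}_{(2\alpha-1,b)}-\frac{\alpha}{2b}K^{GE}_{(2\alpha,b)}$ by $2(2\alpha-1)b/\alpha$ gives
\[
g=2\alpha\,K^{GE}_{(2\alpha-1,b)}-(2\alpha-1)\,K^{GE}_{(2\alpha,b)},
\]
not weights $2\alpha/(2\alpha-1)$ and $-1/(2\alpha-1)$. Your weights cannot be right: they leave a nonzero $e^{-z/b}$ tail, whereas $g$ decays like $e^{-2z/b}$. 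In the interior the true weights are of order $e^{x/b}$, so moments do not transfer ``with bounded weights''. For example, bounding $\int (z-x)^2 g(z)\,dz$ through $\eta_1,\eta_2$ with the triangle inequality only yields $O(e^{x/b}b^2)$. This is precisely the cancellation problem you were trying to escape by not splitting. As written, your claims that $g$ has mean $x+O(b)$ and variance $O(b^2)$ are unsupported.

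Those claims are true, and there are two short repairs.
\begin{itemize}
\item Carry out the cancellation exactly with $\psi(z+1)=\psi(z)+1/z$ and $\psi'(z+1)=\psi'(z)-1/z^2$. The mixture moments then become mean $b[\psi(2\alpha+1)-\psi(2)]$ and variance $b^2[\psi'(2)-\psi'(2\alpha+1)]$.
\item Drop the mixture. Under $g$, the variable $U=e^{-Z/b}$ has density proportional to $u(1-u)^{2\alpha-2}$, i.e.\ $U\sim\mbox{Beta}(2,2\alpha-1)$. So $Z=-b\log U$ has cumulants $(-b)^j\{\psi^{(j-1)}(2)-\psi^{(j-1)}(2\alpha+1)\}=O(b^j)$, which is exactly the paper's cgf argument for $\eta_x$.
\end{itemize}

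Then conclude as the paper does, using the third-order Taylor expansion with $|f'''|\le M$ and $E|Z-\mu|^3=O(b^3)$. Avoid a first-order expansion with only local control of $f'$, since the intermediate point ranges over all of $(0,\infty)$. With this fix, the rest of your argument is correct, including the limits of $\alpha/(2(2\alpha-1))$ in both regimes.
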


\begin{proof}
The variance of the generalised exponential KDE is 
\begin{eqnarray*}
\mbox{Var}\left(\widehat f_{GE}(x)\right)=\dfrac{1}{n}E\left[\left(K^{GE}_{(e^{x/b},b)}(X)\right)^2\right]+\dfrac{1}{n}\left[E\left(K^{GE}_{(e^{x/b},b)}(X)\right)\right]^2, 
\end{eqnarray*}
where $X$ denotes a random variable with density function $f(\cdot)$. From Theorem \ref{thm:bias}, we have that 
$E\left(K^{GE}_{(e^{x/b},b)}(X)\right)=f(x)+O(b)$. Therefore, 
\begin{eqnarray}\label{eq:var}
\mbox{Var}\left(\widehat f_{GE}(x)\right)=\dfrac{1}{n}E\left[\left(K^{GE}_{(e^{x/b},b)}(X)\right)^2\right]+O(n^{-1}). 
\end{eqnarray}

The term involved in the expectation on the right side of (\ref{eq:var}) can be expressed in terms of the beta exponential (BE) distribution introduced by \cite{betaexp}. A random variable $Z$ follows a BE distribution with shape parameters $a,b>0$ and scale parameter $\lambda>0$ if its density function assumes the form
$g^{BE}_{(a,b,\lambda^{-1})}(z)=\dfrac{\lambda}{B(a,b)}e^{-b\lambda z}(1-e^{-bz})^{a-1}$, for $z>0$, where $B(a,b)=\dfrac{\Gamma(a)\Gamma(b)}{\Gamma(a+b)}$ is the beta function. We denote $Z\sim\mbox{BE}(a,b,\lambda^{-1})$. It follows that
\begin{eqnarray}\label{auxvar1}
E\left[\left(K^{GE}_{(e^{x/b},b)}(X)\right)^2\right]&=&\dfrac{e^{2x/b}}{b}B(2e^{x/b}-1,2)E\left[f(\eta_x)\right]\nonumber\\
&=&\dfrac{1}{2b}\dfrac{e^{x/b}}{2e^{x/b}-1}E\left[f(\eta_x)\right], 
\end{eqnarray}
where $\eta_x\sim\mbox{BE}(2e^{x/b}-1,2,b)$. The mean and the variance of $\eta_x$ are given by $\mu_x^*\equiv E(\eta_x)=b[\psi(2e^{x/b}+1)-\psi(2)]$ and $\mbox{Var}(\eta_x)=b^2[\psi'(2)-\psi'(2e^{x/b}+1)]$. We now expand $f(\cdot)$ in Taylor's series around $\mu_x^*$ and take the expectation (in a similar fashion as done in Theorem \ref{thm:bias}) to obtain that
\begin{eqnarray}\label{auxvar2}
E\left[f(\eta_x)\right]=f(\mu_x^*)+\dfrac{1}{2}\mbox{Var}(\eta_x)f''(\mu_x^*)+\frac{1}{6}E\left[(\zeta_x-\mu^*_x)^3f'''(\omega^*_{\eta_x})\right],
\end{eqnarray}
where $\omega_{\eta_x}^*$ lies between $\mu_x^*$ and $\eta_x$. 

We have that $\mu_x^*\rightarrow x$ as $b\rightarrow0$. Moreover, $\mbox{Var}(\eta_x)=b^2[\psi'(2)-\psi'(2e^{x/b}+1)]=o(b)$ and $f''(\mu_x)=f''(x)+O(b)$ (using similar arguments as those from proof of Theorem \ref{thm:bias}) for both interior and boundary $x$. So, $\mbox{Var}(\eta_x)f''(\mu_x)=o(b)$. We now claim that $E\left[(\zeta_x-\mu^*_x)^3f'''(\omega^*_{\eta_x})\right]=o(b^2)$. The way to justify this result follows the same steps as those from Theorem \ref{thm:bias}, and using the fact that the cumulant generating function of $\eta_x$ is given by
$\Phi_{\eta_x}(t)\equiv \log E(e^{t\eta_x})=\log\Gamma(2-bt)-\log\Gamma(2e^{x/b}-bt+1)+\log\Gamma(2e^{x/b}+1)$,
for $t<2/b$, where we have used Eq. (3.1) from \cite{betaexp} to get the mgf of a beta exponential distribution. The cumulants of $\eta_x$ are given by $\kappa_j=(-b)^j\{\psi^{(j-1)}(2)-\psi^{(j-1)}(2e^{x/b}+1)\}=O(b^j)$, which holds for both interior and boundary $x$.

Using the above results, (\ref{auxvar1}), and (\ref{auxvar2}) in (\ref{eq:var}), we obtain that 
\begin{eqnarray}\label{varfinal}
\mbox{Var}\left(\widehat f_{GE}(x)\right)&=&\dfrac{1}{2nb}\dfrac{e^{x/b}}{2e^{x/b}-1}\left\{f(\mu_x^*)+o(b)+o(b^2)\right\}+O(n^{-1})\nonumber\\
&=&\dfrac{1}{2nb}\dfrac{e^{x/b}}{2e^{x/b}-1}f(\mu_x^*)+o\left(\dfrac{1}{bn}\right)+O(n^{-1})\nonumber\\ 
&=&\dfrac{1}{4nb}\dfrac{e^{x/b}}{2e^{x/b}-1}\left[f(x)+O(b)\right]+o\left(\dfrac{1}{bn}\right)\nonumber\\ 
&=&\dfrac{1}{4nb}\dfrac{e^{x/b}}{e^{x/b}-1/2}f(x)+o\left(\dfrac{1}{bn}\right),
\end{eqnarray}
where we used that $f(\mu_x^*)=f(x)+O(b)$ for both interior and boundary $x$ and that $O(n^{-1})=o(1/(nb))$. From (\ref{varfinal}), the stated expressions for the variance under the interior and boundary cases are immediately obtained. 
\end{proof}

With the asymptotic mean and variance of the GE kernel density estimators, we can obtain an expression for the mean squared error (MSE), which is 
\begin{eqnarray*}
	\mbox{MSE}\left(\widehat f_{GE}(x)\right) &=& \left(b\gamma f'(x) + \dfrac{1}{2}\left(\gamma^2 + \dfrac{\pi^2}{6}\right)b^2f''(x) + o(b^2)\right)^2 + \dfrac{1}{4bn}f(x) + o\left(\dfrac{1}{nb}\right)\nonumber \\
    &=&b^3\gamma\left(\gamma^2 + \dfrac{\pi^2}{6}\right) f'(x)f''(x)+b^2\gamma^2 f'(x)^2+ \dfrac{1}{4bn}f(x) +  o(b^3 + (nb)^{-1}), 
\end{eqnarray*}
while the mean integrated squared error (MISE) is approximately
\begin{align*}
	\mbox{MISE}\left(\widehat f_{GE}(x)\right) \approx b^3\gamma\left(\gamma^2 + \dfrac{\pi^2}{6}\right)\int_0^\infty f'(x)f''(x)dx+b^2\gamma^2\int_0^\infty f'(x)^2dx + \dfrac{1}{4bn}, 
\end{align*}
where we have assumed that $\displaystyle\int_0^\infty f''(x)^2dx<\infty$ and $\displaystyle\int_0^\infty f'(x)^2dx<\infty$ (which implies that $\displaystyle\int_0^\infty f'(x)f''(x)dx$ is integrable by using Cauchy-Schwarz inequality).

We cannot obtain an explicit form for the optimal bandwidth $b$ that minimises the approximated MISE, but this can be done numerically. Also, the integrals involved in that expression can be computed by replacing the first two derivatives with some estimators or by assuming some true density function. In this paper, the rule-of-thumb introduced by \cite{silverman} will be adopted when calculating the bandwidth parameter.

\section{Monte Carlo experiments}\label{sec:MC}
To investigate the finite sample properties of our KDE and its competitors, simulations were carried out to compare the MISEs of the GE kernel, the two gamma kernels \citep{chen} and the RIG kernel \citep{scaillet} (the IG case was not presented since such a kernel did not produce competitive results). Five different underlying distributions are considered as shown in Table ~\ref{tab:distributions}, involving gamma, inverse gamma, inverse Weibull, mixture of gamma, mixture of inverse gamma, and mixture of inverse Weibull. Two sample sizes, $n=100$ and $n=500$, were considered, and the number of Monte Carlo replications was 1000. Each kernel is tested on all six of the configurations, and the MISE results are recorded in boxplots as shown in Figures \ref{fig:bias_plots1} and \ref{fig:bias_plots2}. Bandwidth was computed using Silverman's rule-of-thumb \citep{silverman}. 
\begin{table}[ht!]
	\begin{center}
		\begin{tabular}{ccc}
			\hline
			Configuration & Distribution  \\ 
			\hline
			A & Gamma(25,0.5) \\
			B & Inverse Gamma(25,150) \\
			C & Inverse Weibull(5,800) \\
			D & Gamma Mixture((2/3,1/3), (\(\Gamma(25,0.5)\), \(\Gamma(5,2)\))) \\
			E & Inverse Gamma Mixture((2/3,1/3), (\(IGam(25,150)\), \(IGam(30,5)\)))\\
        	\hline
		\end{tabular}
		\caption{Distributions/configurations used in the Monte Carlo simulation experiments.}
		\label{tab:distributions}
	\end{center} 
\end{table}
\begin{figure}[ht!]
	\begin{subfigure}{8cm}
		\includegraphics[width=8cm]{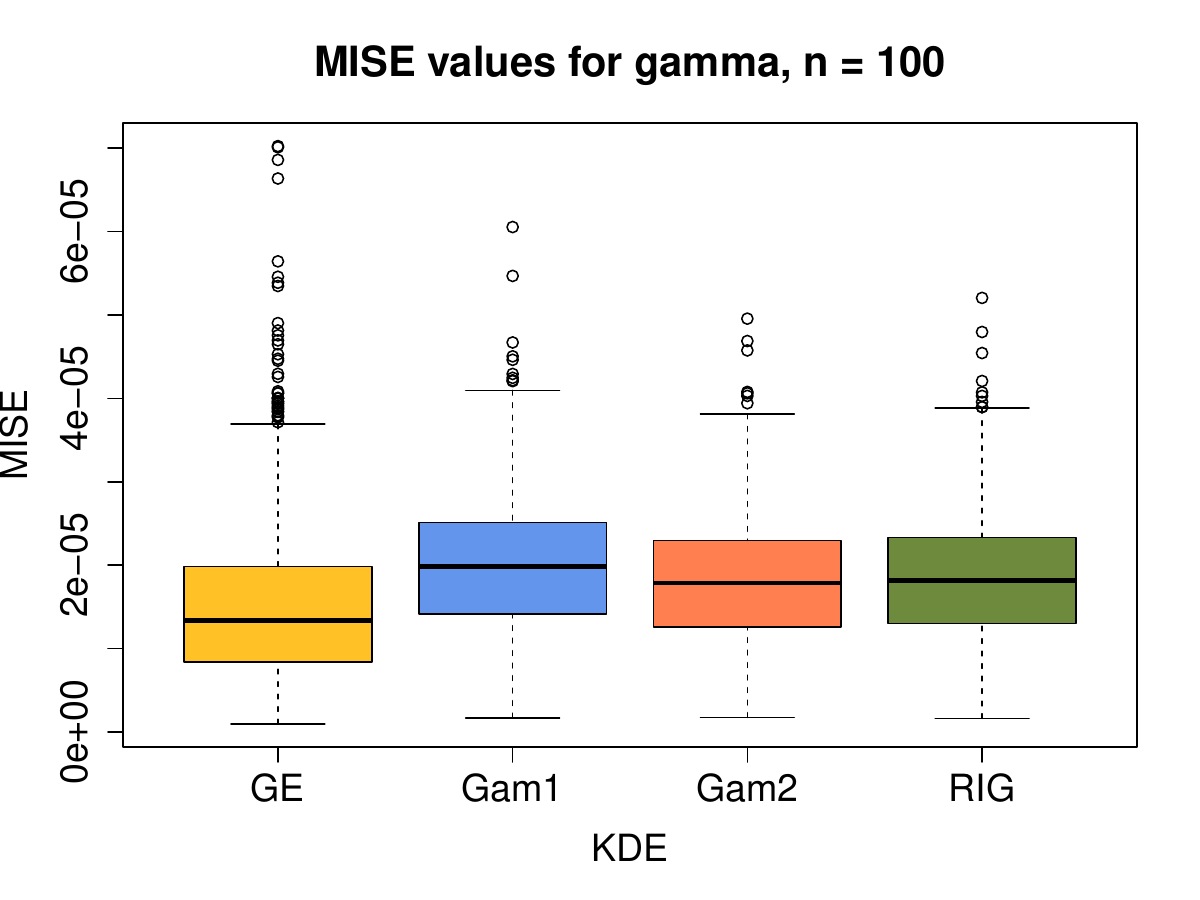}
	\end{subfigure}
	\begin{subfigure}{8cm}
		\includegraphics[width=8cm]{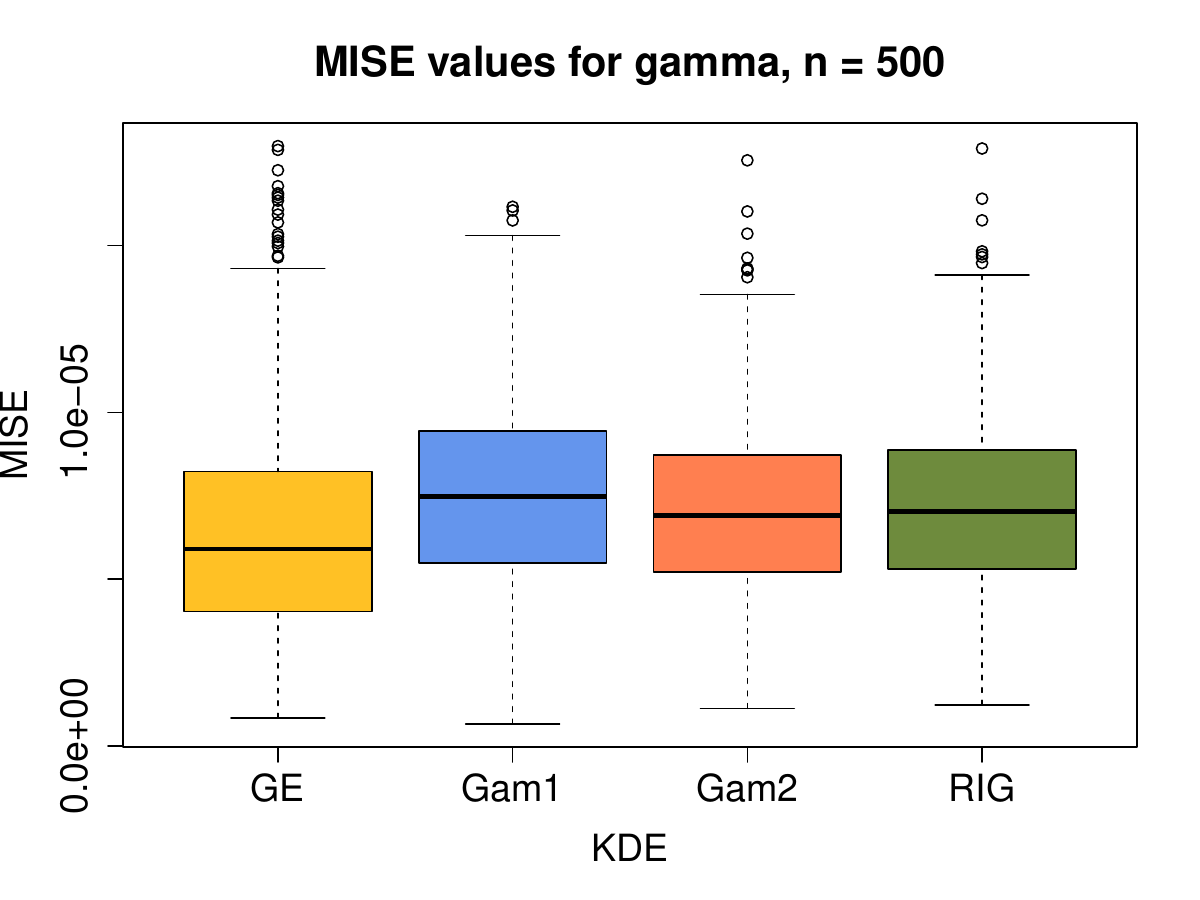} 	
	\end{subfigure}
	\begin{subfigure}{8cm}
		\includegraphics[width=8cm]{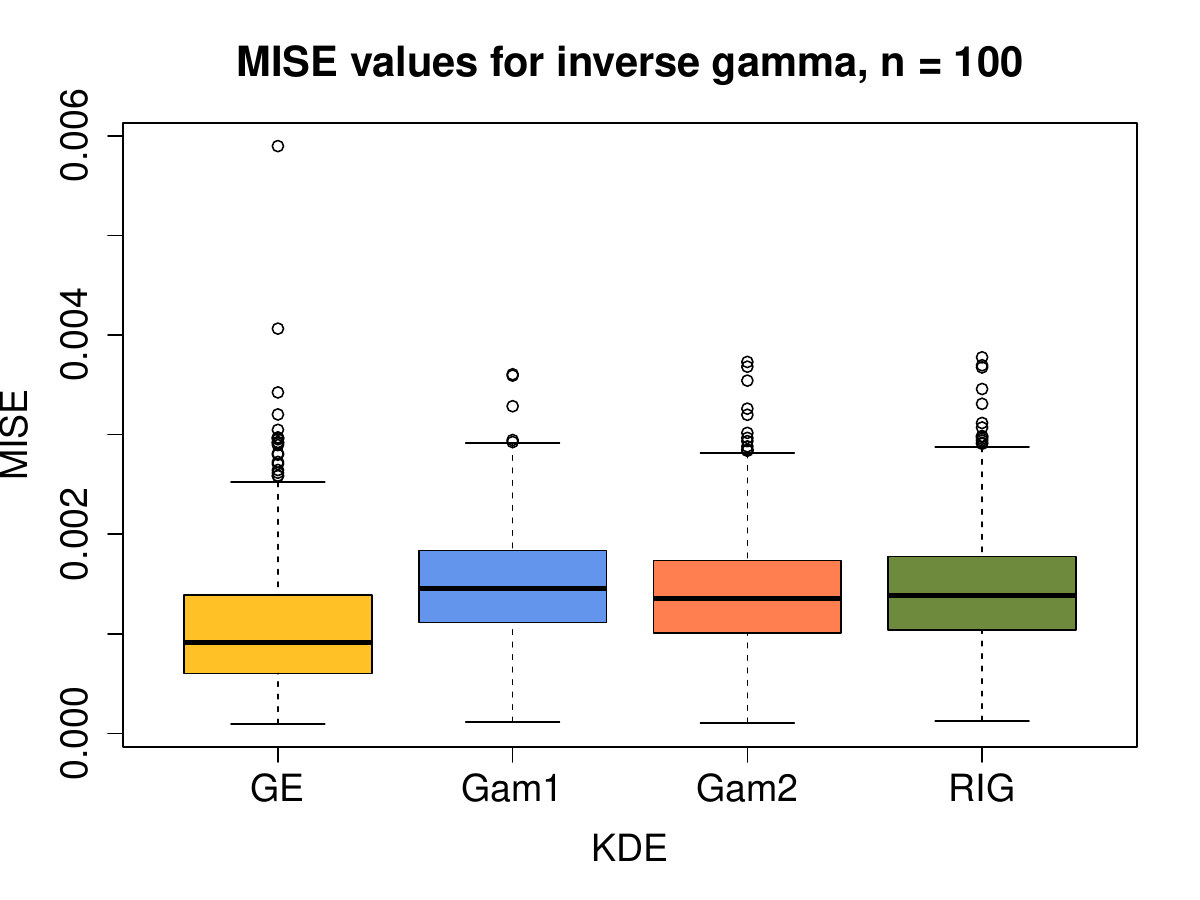}
	\end{subfigure}
	\begin{subfigure}{8cm}
		\includegraphics[width=8cm]{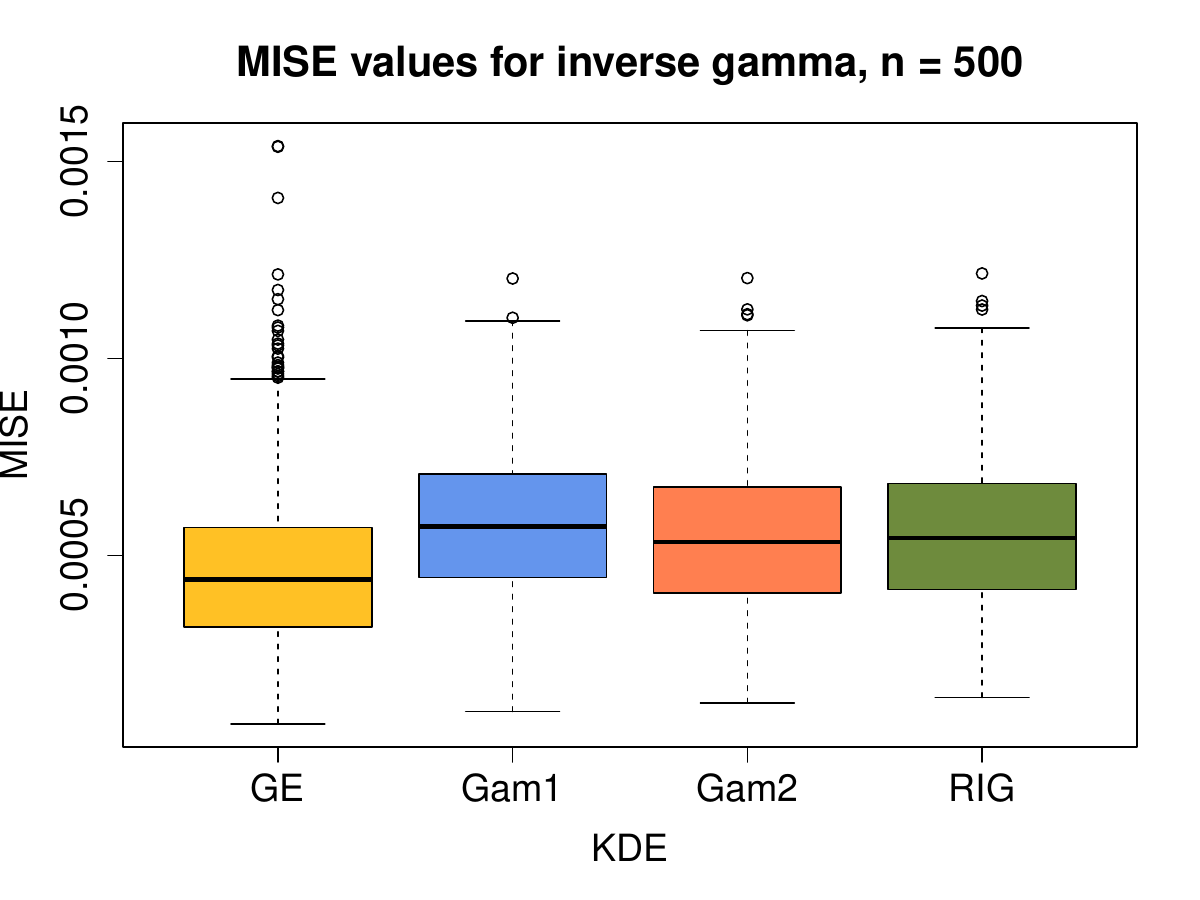}
	\end{subfigure}
	\begin{subfigure}{8cm}
		\includegraphics[width=8cm]{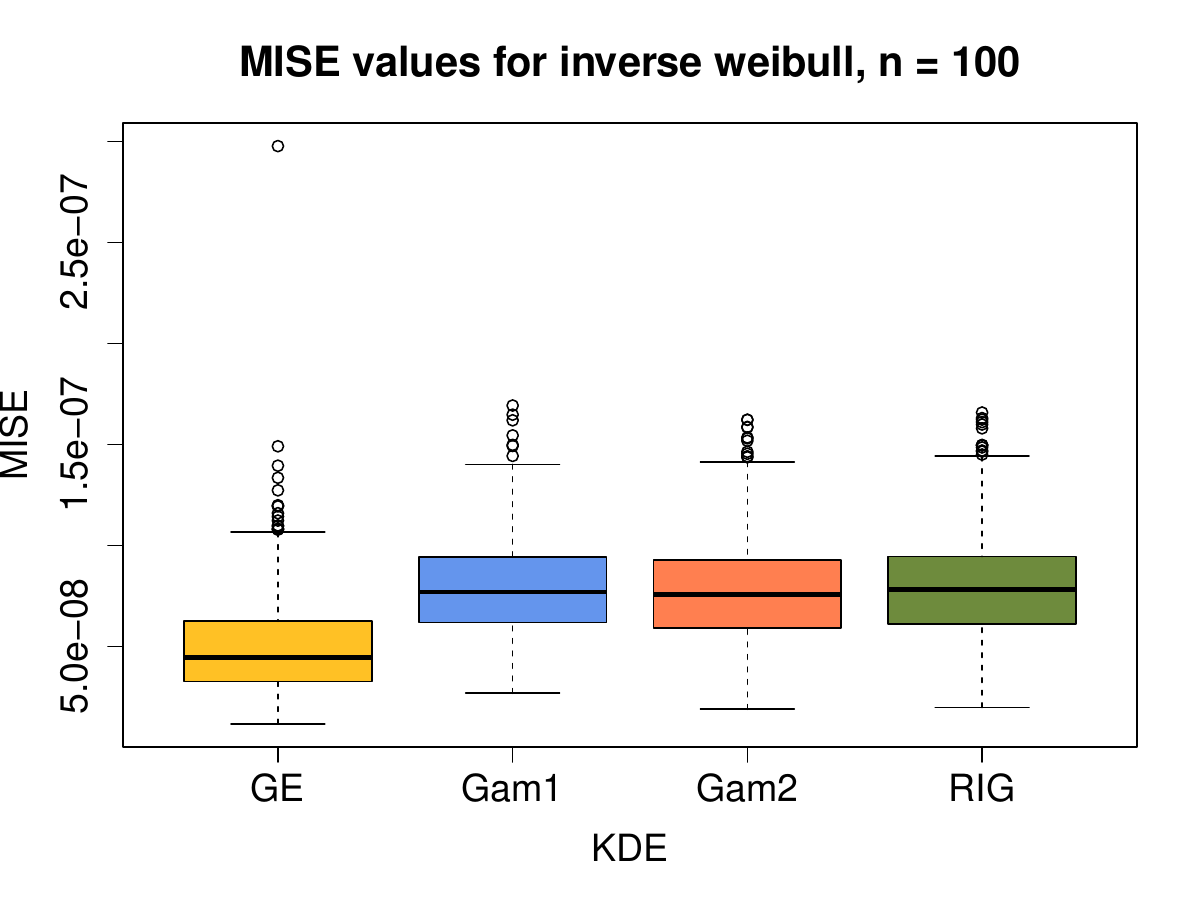}
	\end{subfigure}
	\begin{subfigure}{8cm}
		\includegraphics[width=8cm]{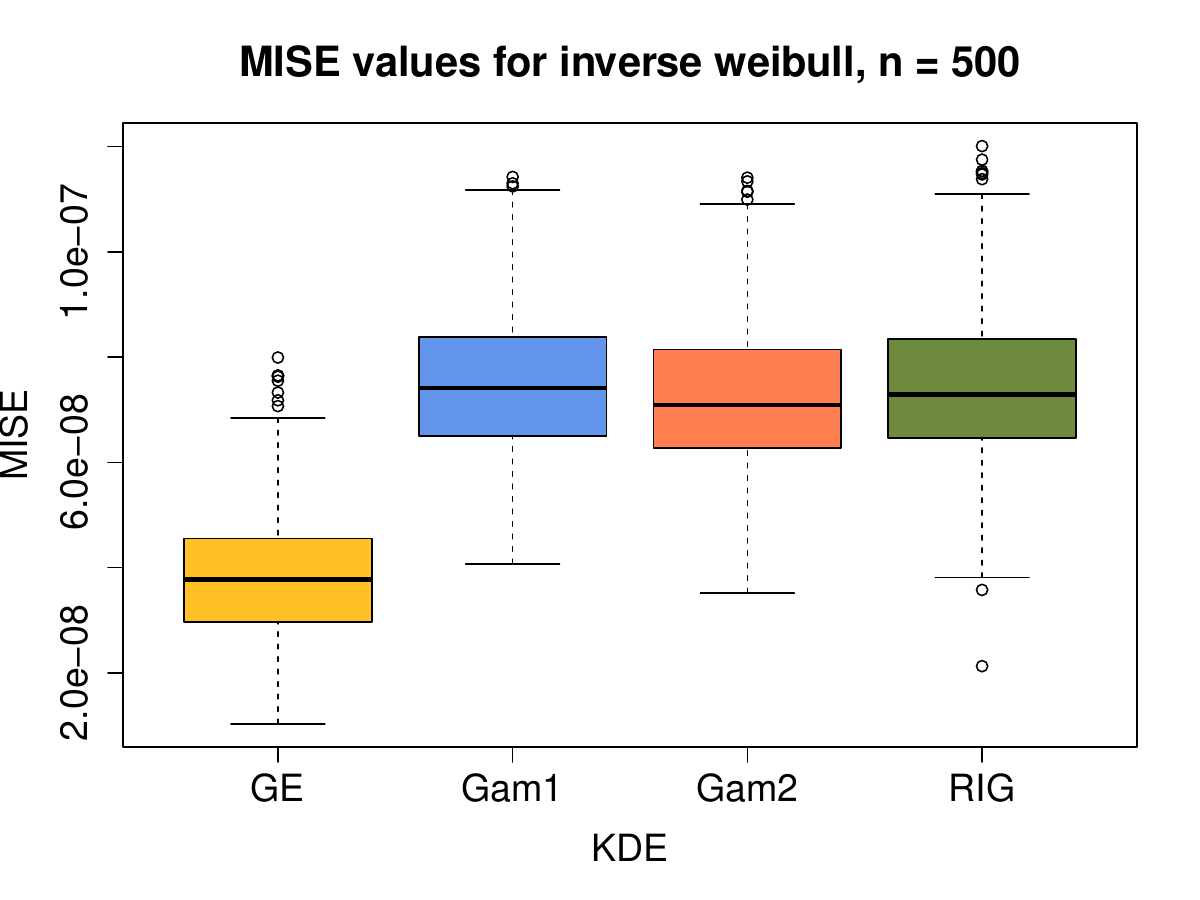}	
	\end{subfigure}
    	\caption{Boxplots of the MISEs based on the Configurations A, B, and C.}
	\label{fig:bias_plots1}
\end{figure}
\begin{figure}[!h]
	\begin{subfigure}{8cm}
		\includegraphics[width=8cm]{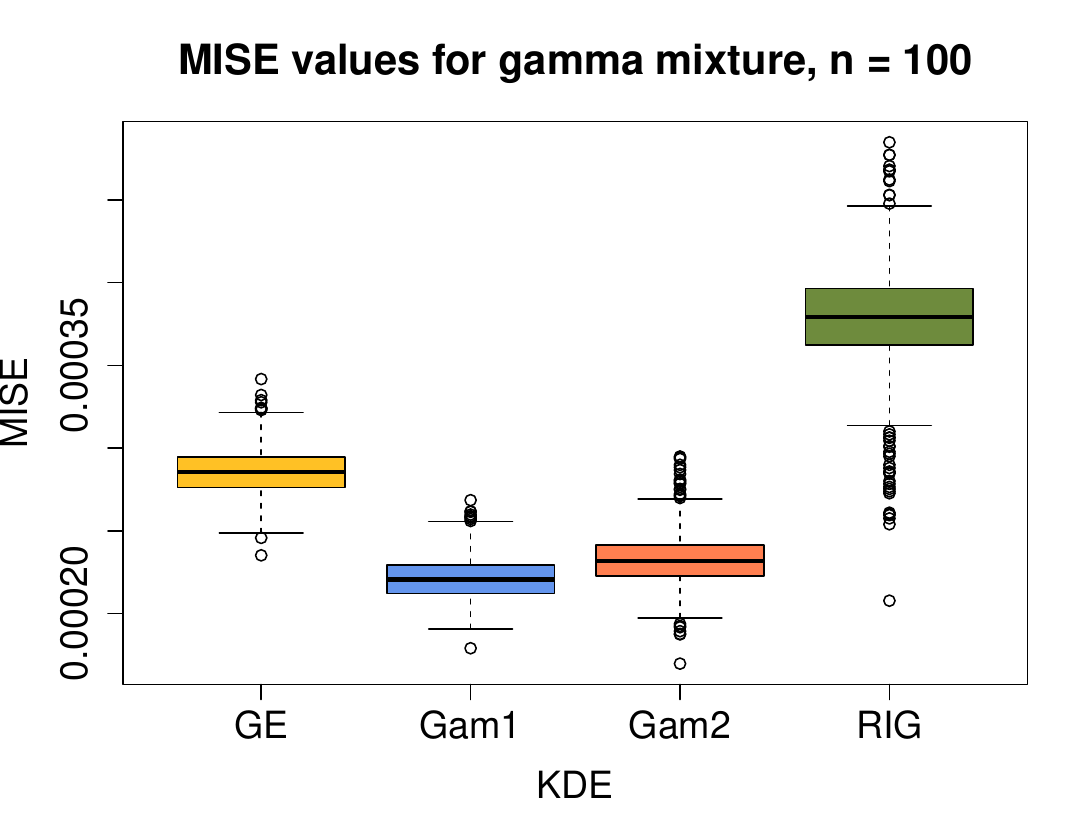}
	\end{subfigure}
	\begin{subfigure}{8cm}
		\includegraphics[width=8cm]{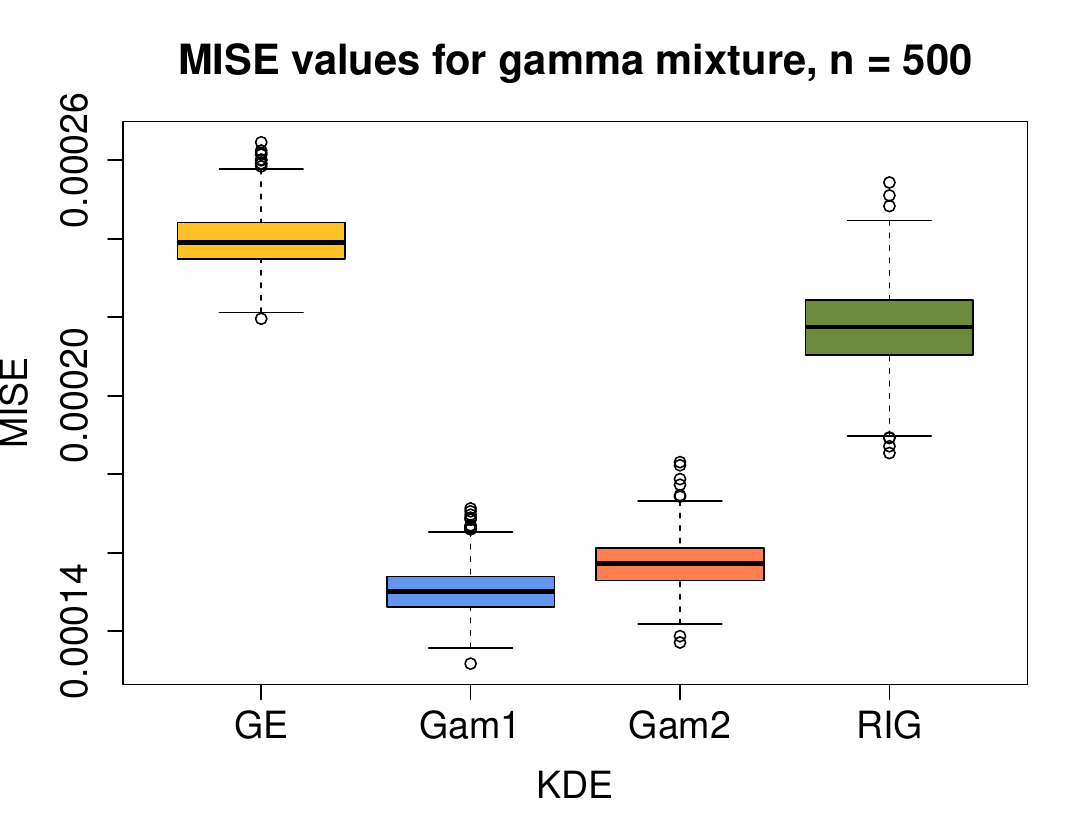}
	\end{subfigure}
	\begin{subfigure}{8cm}
		\includegraphics[width=8cm]{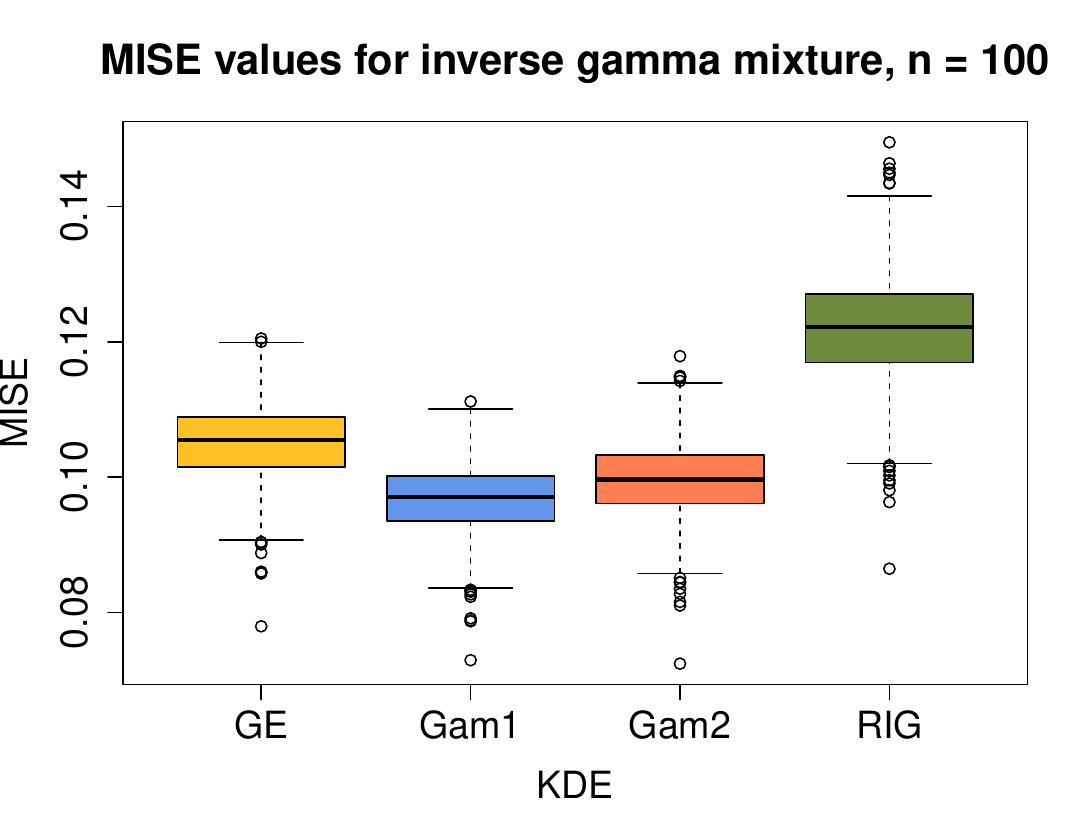}
	\end{subfigure}
	\begin{subfigure}{8cm}
		\includegraphics[width=8cm]{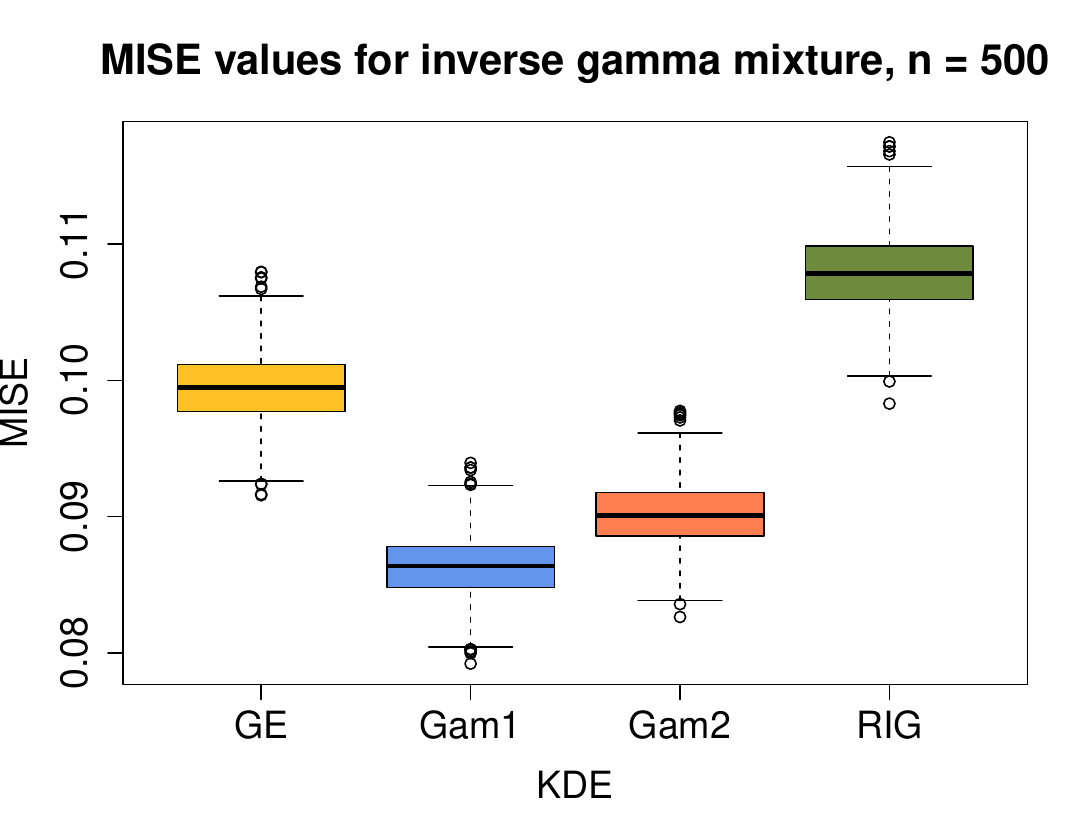}	
	\end{subfigure}
	\caption{Boxplots of the MISEs based on the Configurations D and E.}
	\label{fig:bias_plots2}
\end{figure}

For configurations A, B, and C, the GE MISE values are lower than those of the Gamma and RIG kernels. The variances are similar across all kernels, and the MISE values of the Gamma and RIG kernels are similar. The MISE results for the GE and Gamma kernels are competitive for configurations D and E, sharing a similar variance with Gam1 coming in with the lowest values. The RIG results are the largest in both magnitude and variance.  As can be seen from Table ~\ref{tab:mean_mise}, the average MISE values resulting from use of the GE kernel in the simulation scenarios are more often than not the lowest out of the KDEs presented. 

\begin{table}[ht!]
	\begin{center}
		\begin{tabular}{cccccc}
			\hline
			Configuration & $n$ & GE & Gam1 & Gam2 & RIG  \\ 
			\hline
	    	\multirow{2}{*}{A} & 100 & \num[math-rm=\mathbf]{1.55e-05} & \num{2.01e-05} & \num{1.82e-05} & \num{1.86e-05}  \\
			& 500 & \num[math-rm=\mathbf]{6.40e-06} & \num{7.59e-06} & \num{7.07e-06} & \num{7.19e-06}  \\
			\multirow{2}{*}{B} & 100 & \num[math-rm=\mathbf]{1.05e-03} & \num{1.49e-03} & \num{1.40e-03} & \num{1.43e-03}  \\
			& 500 & \num[math-rm=\mathbf]{4.64e-04} & \num{5.81e-04} & \num{5.46e-04} & \num{5.54e-04}  \\
	    	\multirow{2}{*}{C} & 100 & \num[math-rm=\mathbf]{4.96e-08} & \num{7.91e-08} & \num{7.69e-08} & \num{7.90e-08}  \\
	    	& 500 & \num[math-rm=\mathbf]{3.83e-08} & \num{7.47e-08} & \num{7.21e-08} & \num{7.41e-08}  \\
			\multirow{2}{*}{D} & 100 & \num{3.64e-04} & \num{3.63e-04} & \num[math-rm=\mathbf]{3.61e-04} & \num{3.64e-04} \\
			& 500 & \num{3.42e-04} & \num{3.32e-04} & \num[math-rm=\mathbf]{3.29e-04} & \num{3.31e-04}  \\
			\multirow{2}{*}{E} & 100 & \num[math-rm=\mathbf]{1.05e-01} & \num{9.67e-02} & \num{9.95e-02} & \num{1.20e-01}  \\
			& 500 & \num[math-rm=\mathbf]{9.95e-02} & \num{8.63e-02} & \num{9.02e-02} & \num{1.08e-01} \\
        	\hline
		\end{tabular}
		\caption{Average MISE values from Monte Carlo simulations based on GE, Gam1, Gam2, and RIG KDEs.}
		\label{tab:mean_mise}
	\end{center}
\end{table}

\section{An alternative GE KDE with optimal MISE}\label{sec:GE2}

We here propose a second GE KDE with a reparameterised GE distribution in terms of the mean. As a consequence, we will see that the dependence of the bias on the first-order derivative of $f$ will be eliminated, which is commonly desirable. Furthermore, we will explicitly show that this new KDE achieves the optimal MISE order of $O(n^{-4/5})$. 

The second GE KDE (referred to as GE2 in what follows) is defined by
\begin{align}
	\label{eq:ge2_final}
	\widehat f_{GE2}(x) = \mfrac{1}{n} \sum_{i=1}^{n} K^{GE}_{(\nu({x/b}),b)}(X_i),\quad x>0,
\end{align}
where we have defined $\nu({x/b})=\psi^{-1}\left(x/b+\psi(1)\right)-1$, with $\psi^{-1}(\cdot)$ being the inverse function of $\psi(\cdot)$. In terms of \texttt{R} implementation in our numerical experiments, we consider the function \texttt{idigamma()} from the package \texttt{bzinb} for the computation of the inverse of the digamma function involved in (\ref{eq:ge2_final})). The expected value of a $\mbox{GE}(\nu(x/b),b)$ distribution is $x$. Therefore, GE2 KDE is defined with observations as the mean of the kernels, in contrast with the first GE KDE, where the mode is used. The next result provides the asymptotic mean and variance of the alternative GE kernel density estimator.

\begin{theorem}\label{thm:bias2} Assume that $f(\cdot)$ is a three-times differentiable function with bounded third derivative. Then, 
\begin{eqnarray*}
	\mbox{bias}\left(\widehat f_{GE2}(x)\right)=
b^2\dfrac{\pi^2}{12}f''(x)+o(b^2),   
\end{eqnarray*} 
as $b\rightarrow0$. Moreover, the asymptotic variance of $\widehat f_{GE2}(x)$ assumes the form as that of the estimator $\widehat f_{GE}(x)$ provided in Theorem \ref{thm:var1}.
\end{theorem}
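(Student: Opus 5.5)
We follow the proof of Theorem \ref{thm:bias} step by step, now with $\zeta_x\sim\mbox{GE}(\nu(x/b),b)$, so that $\E(\widehat f_{GE2}(x))=\E(f(\zeta_x))$. The key gain is that, by construction of $\nu$, we have $\mu_x=\E(\zeta_x)=b[\psi(\nu(x/b)+1)-\psi(1)]=b[x/b+\psi(1)-\psi(1)]=x$ \emph{exactly}. Expanding $f$ in Taylor's series around $\mu_x=x$ up to third order, exactly as in (\ref{eq:taylor_1}), the decomposition (\ref{bias_proof}) reads
\begin{eqnarray*}
\mbox{bias}\left(\widehat f_{GE2}(x)\right)=\frac{1}{2}\mbox{Var}(\zeta_x)f''(x)+\frac{1}{6}\E\left[(\zeta_x-x)^3f'''(\omega_{\zeta_x})\right].
\end{eqnarray*}
The term $f(\mu_x)-f(x)$ now vanishes identically, and with it the $f'$ contribution.

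For the variance term, we have $\mbox{Var}(\zeta_x)=b^2[\pi^2/6-\psi'(\nu(x/b)+1)]$. We need $\nu(x/b)\to\infty$. Since $\psi$ is increasing and $\psi(z)=\log z+O(z^{-1})$, inverting $\psi(\nu+1)=x/b-\gamma$ gives $\nu(x/b)+1=e^{x/b-\gamma}(1+o(1))$, which tends to $\infty$ when $x/b\to\infty$. Using $\psi'(z)=O(z^{-1})$, we then get $\psi'(\nu+1)=O(e^{-x/b})=o(1)$. Hence $\frac12\mbox{Var}(\zeta_x)f''(x)=b^2\frac{\pi^2}{12}f''(x)+o(b^2)$. (If the statement is meant to cover fixed $x>0$, this is exactly the interior regime; for $x/b\to c$ one would instead get $b^2[\pi^2/6-\psi'(\psi^{-1}(c-\gamma))]f''/2$. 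I would restrict the claim to the interior case, noting this.)

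For the remainder, we repeat the argument leading to (\ref{remaining}): Cauchy--Schwarz together with $|f'''|\le M$ reduces it to bounding $\E[(\zeta_x-x)^6]$. The cumulants (\ref{cumulants}), with $e^{x/b}$ replaced by $\nu(x/b)$, are $\kappa_j=(-b)^j\{\psi^{(j-1)}(1)-\psi^{(j-1)}(\nu+1)\}=O(b^j)$, since polygamma functions are bounded on $[1,\infty)$. Then (\ref{sixthmoment}) gives $\mu_6=O(b^6)$, so the remainder is $O(b^3)=o(b^2)$. The main technical point is the asymptotic inversion of $\psi$ to control $\nu(x/b)$, but monotonicity plus the log expansion suffice.

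For the variance, we mimic Theorem \ref{thm:var1} with $\alpha=\nu(x/b)$. This gives $\E[K^2]=\frac{\alpha^2}{b}B(2\alpha-1,2)\E f(\eta_x)=\frac{1}{2b}\frac{\alpha}{2\alpha-1}\E f(\eta_x)$, with $\eta_x\sim\mbox{BE}(2\alpha-1,2,b)$. Its mean is $b[\psi(2\alpha+1)-\psi(2)]=b\log(2\alpha)+O(b)=x+O(b)$, and its cumulants are $O(b^j)$. Hence $\E f(\eta_x)=f(x)+O(b)$, and since $\alpha/(2\alpha-1)\to1/2$ as $\alpha\to\infty$, we obtain $\mbox{Var}=\frac{1}{4nb}f(x)+o(1/(nb))$. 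This is the same form as in Theorem \ref{thm:var1}; the boundary analogue follows identically, with $e^c$ replaced by $\lim\nu$.
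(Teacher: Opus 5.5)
Your proof is correct and is essentially the paper's argument written out in full. The paper only notes that $\nu(x/b)$ grows like $e^{-\gamma}e^{x/b}$ and says the GE proofs carry over; your exact identity $\mu_x=x$ makes explicit why the $f'$ term disappears and only $\frac{1}{2}\cdot\frac{\pi^2}{6}b^2f''(x)$ survives, and restricting the bias formula to the interior regime is appropriate.

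One caution on your boundary aside for the variance. The beta-exponential representation needs $2\nu-1>0$, so replacing $e^c$ by $\lim\nu=\psi^{-1}(c-\gamma)-1$ is valid only when this limit exceeds $1/2$, i.e.\ $c>2-2\log 2$. Below that threshold, and with $f(0)>0$, the second moment of $K^{GE}_{(\nu(x/b),b)}(X_1)$ is infinite, so the estimator's variance is infinite there.
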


\begin{proof}
By using results from \cite{iss2016} (who formally proved the asymptotic series for the inverse of the digamma function given by Ramanujan), we have that $\psi^{-1}(z)=e^z+o(1)$ as $z\rightarrow\infty$. Hence, it follows that $\nu(x/b)=e^{x/b}e^{-\gamma}+o(1)$, where $\gamma$ is the Euler's constant.
In other words, the term $\nu(x/b)$ involved in GE2 KDE behaves similarly asymptotically (when $b\rightarrow0$) as $e^{x/b}$ used in GE KDE, apart from the constant $e^{-\gamma}$. This implies that the proofs for the asymptotic expressions for the bias and variance for the GE2 KDE follow exactly as those for GE KDE, and therefore, they are omitted.
\end{proof}

The mean squared error (MSE) of the GE2 KDE is 
\begin{eqnarray*}
	\mbox{MSE}\left(\widehat f_{GE2}(x)\right) = b^4\dfrac{\pi^4}{144}f''(x)^2+\dfrac{1}{4bn}f(x)+  o(b^4 + (nb)^{-1}), 
\end{eqnarray*}
while the MISE is approximately
\begin{align}\label{amise}
	\mbox{MISE}\left(\widehat f_{GE2}(x)\right) \approx b^4\dfrac{\pi^4}{144}\int_0^\infty f''(x)^2dx+\dfrac{1}{4bn}, 
\end{align}
where we are assuming that $\displaystyle\int_0^\infty f''(x)^2dx<\infty$. In this case, the optimal bandwidth parameter (say $b^*$) that minimises the approximate MISE can be computed explicitly and is given by
\begin{eqnarray}\label{opt}
b^*=\left(\dfrac{9}{\pi^4 \int_0^\infty f''(x)^2dx}\right)^{1/5}n^{-1/5}.
\end{eqnarray}

Using the optimal bandwidth (\ref{opt}) in (\ref{amise}), we obtain that $\mbox{MISE}\left(\widehat f_{GE2}(x)\right)=O(n^{-4/5})$, which achieves the optimal order of existing KDEs such as Gaussian, gamma, IG, and RIG, among others. We now provide two important remarks.

\begin{remark}
Note that the bandwidth parameter 
$b$ for our GE kernels is comparable in magnitude to the Gaussian bandwidth 
$b$, whereas for the gamma, IG, and RIG kernels, the appropriate comparison is with their squared bandwidth, that is, $b^2$.
\end{remark}

\begin{remark} (Finiteness of integrals) The GE2 KDE only requires $\displaystyle\int_0^\infty f''(x)^2dx<\infty$, whereas GE1 KDE additionally requires $\displaystyle\int_0^\infty f'(x)^2dx<\infty$. To use the gamma KDEs by \cite{chen}, it is necessary that $\displaystyle\int_0^\infty \left(xf''(x)\right)^2dx<\infty$ and $\displaystyle\int_0^\infty f'(x)^2dx<\infty$. On the other hand, the RIG and IG kernels by \cite{scaillet} assume that $\displaystyle\int_0^\infty \left(xf''(x)\right)^2dx<\infty$ and $\displaystyle\int_0^\infty x^{-1/2}f(x)dx<\infty$, and 
$\displaystyle\int_0^\infty \left(x^3f''(x)\right)^2dx<\infty$ and $\displaystyle\int_0^\infty x^{-3/2}f(x)dx<\infty$, respectively. This highlights the importance of considering a range of KDEs when analysing positive data, as different methods rely on different underlying assumptions. Moreover, our proposed GE kernels, particularly GE2, require substantially simpler conditions than those imposed by existing KDEs.
\end{remark}

We now conclude this section by presenting a small Monte Carlo with an
additional Configuration F involving a mixture of inverse Weibull distributions, more specifically: \\Inverse Weibull Mixture((2/3,1/3), (\(IW(5,800)\), \(IW(10,400)\))). Figure \ref{fig:add_conf_F} presents the boxplots of the MISEs based on Configuration F under GE, GE2, Gam1, Gam2, and RIG kernels. The average MISE values are reported in Table \ref{tab:mean_mise_conf_F}. These results show that the GE kernels present the smallest MISE values, with GE2 being the best approach under the configuration considered for both sample sizes $n=100$ and $n=500$.

\begin{figure}[ht!]
	\begin{subfigure}{8cm}
		\includegraphics[width=8cm]{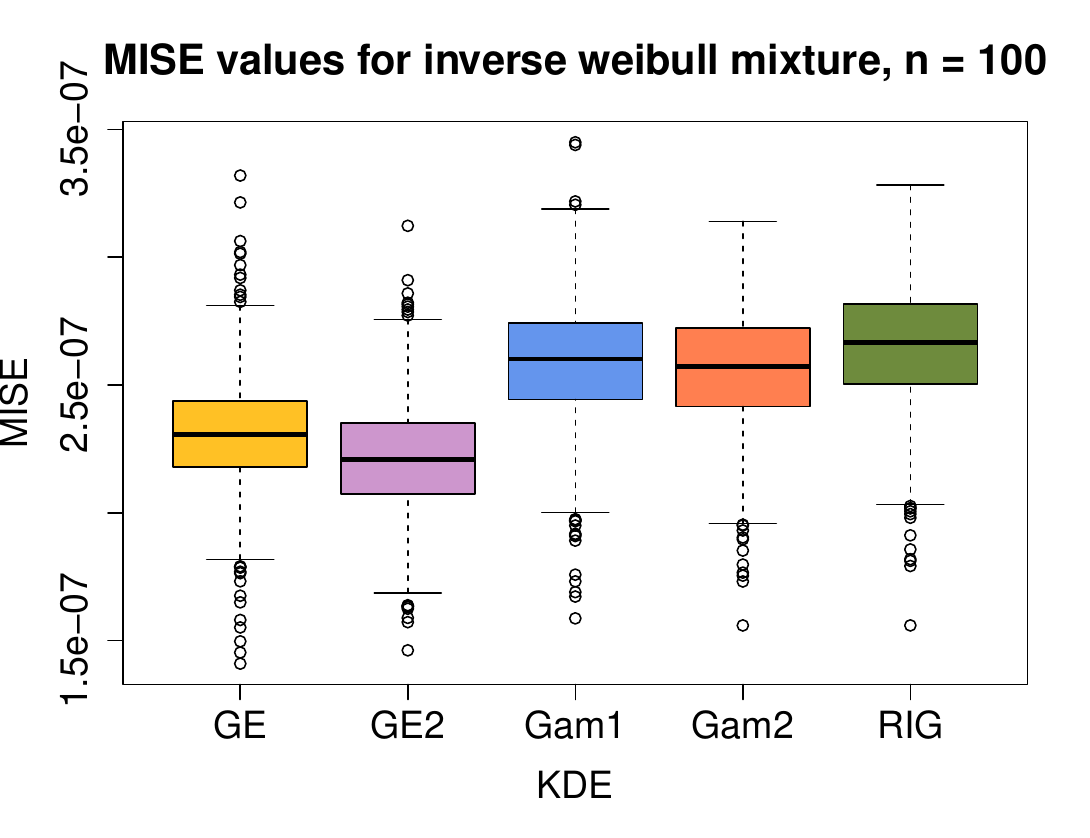}
	\end{subfigure}
	\begin{subfigure}{8cm}
		\includegraphics[width=8cm]{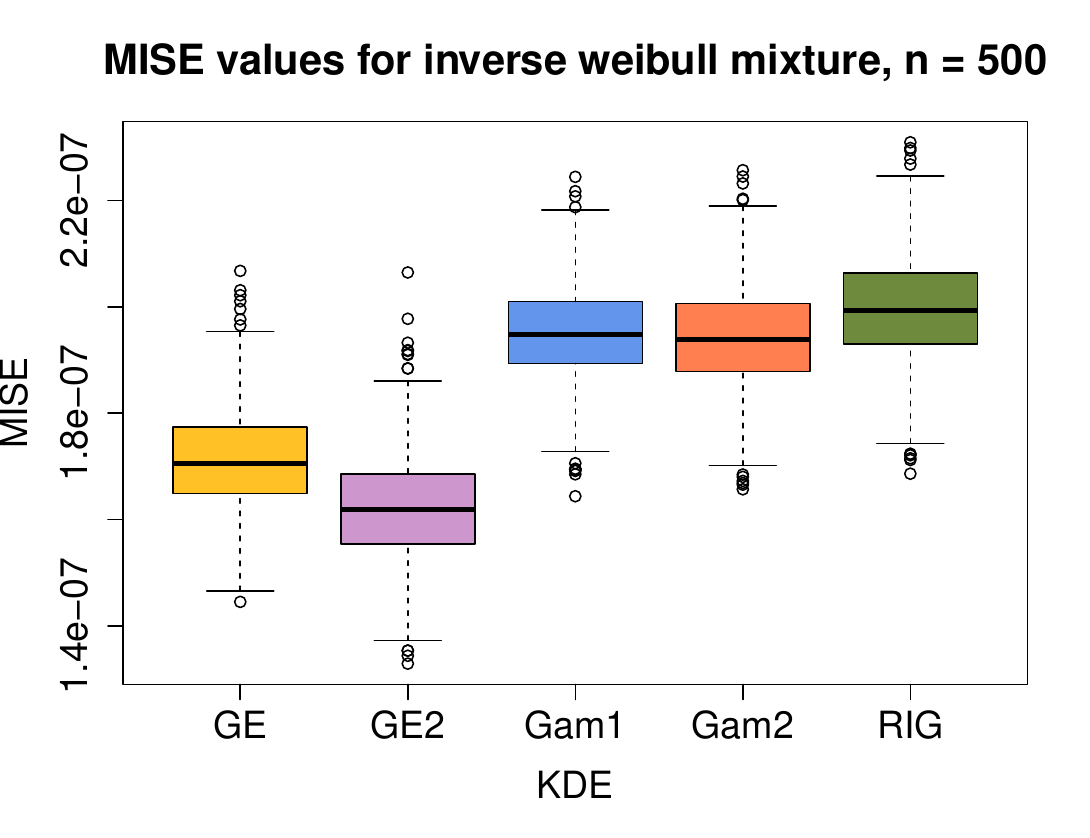} 	
	\end{subfigure}
    \caption{Boxplots of the MISEs based on Configuration F.}\label{fig:add_conf_F}
\end{figure}

\begin{table}[ht!]
	\begin{center}
		\begin{tabular}{ccccccc}
			\hline
			Configuration & $n$ & GE & GE2 & Gam1 & Gam2 & RIG \\ 
			\hline
	    	\multirow{2}{*}{F} & 100 & \num{2.31e-07} & \num[math-rm=\mathbf]{2.22e-07} & \num{2.58e-07} & \num{2.56e-07} & \num{1.22e-01} \\
            & 500 & \num{1.71e-07} & \num[math-rm=\mathbf]{1.62e-07} & \num{1.95e-07} & \num{1.94e-07} & \num{1.08e-01}\\
        	\hline
		\end{tabular}
		\caption{Average MISE values from Monte Carlo simulation for Configuration F under GE, GE2, Gam1, Gam2, and RIG KDEs.}
		\label{tab:mean_mise_conf_F}
	\end{center}
\end{table}

In the next section, we compare the GE kernels with their competitors using two real data applications: one involving the lifetimes (in years) of retired women with temporary disabilities enrolled in the Mexican public insurance system who died in 2004, and another based on a snowfall dataset collected in Grand Rapids in December 1983.

\section{Real data applications}\label{sec:app}

This section explores the application of the GE kernels to two real datasets and compares them with the existing competitors: Gam1, Gam2, and RIG KDEs. The bandwidth estimate in this section is calculated using Silverman's rule-of-thumb as before (with the dependence on the sample size matching the kernels' optimality).  Cross-validation to calculate bandwidth was experimented with; however, the results yielded were poor compared to the Silverman bandwidth choices. Cross-validation bandwidth estimates were particularly bad for the RIG kernel. 

\subsection{Mexican Institute of Social Security data}
This data was considered by \cite{ig-dist} as an application of a class of inverse Gaussian type distributions. The data set contains 280 observations in the form of the lifetime (in years) of retired women with temporary disabilities incorporated into the Mexican public insurance system, who died in 2004. The data was originally provided by the Mexican Institute of Social Security. 

All of the kernels are reasonably smooth, with the GE kernels peaking in the right area. The Gamma and RIG kernels do not peak in quite the right area. 
The approximate MISE were $1.020\times10^{-5}$, $1.015\times10^{-5}$, $3.213\times10^{-5}$, $3.307\times10^{-5}$, and $3.358\times10^{-5}$ for the GE, GE2, Gam1, Gam2, and, RIG kernels, respectively. The smallest MISE is achieved by GE2, closely accompanied by GE. Figure \ref{fig:lifetime_app} displays the KDEs along with the histogram for the Mexican Institute of Social Security data. From the plot, we observe that GE and GE2 better capture the features of the data. GE exhibits slightly more local variation, whereas GE2 is smoother while still peaking at the appropriate points.

\begin{figure}[ht!]
	\includegraphics[width=18cm]{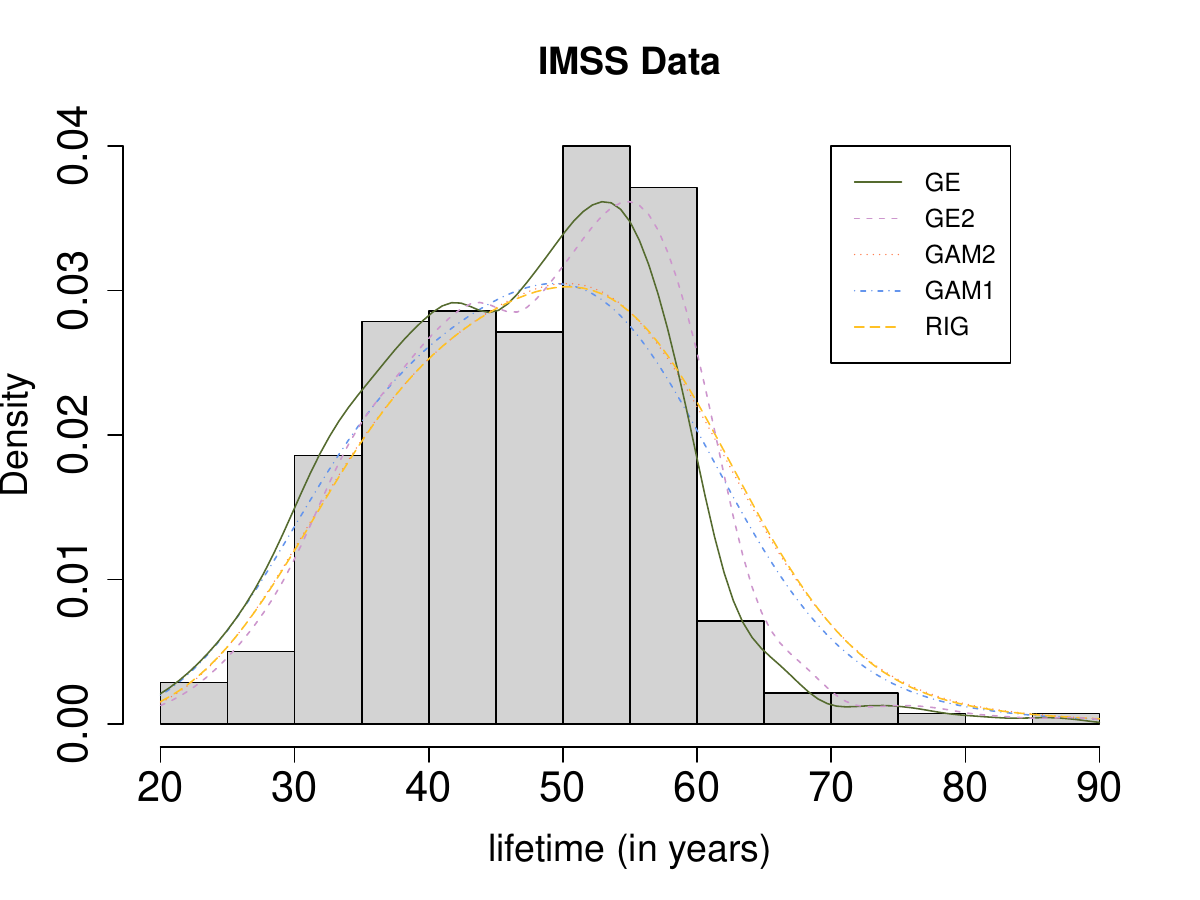} 
	\caption{Kernel density estimates and histogram for the Mexican Institute of Social Security data.}\label{fig:lifetime_app}
\end{figure}

\subsection{Snow data}

This data set was analysed by \cite{beta-prime} to illustrate nonparametric density estimation based on the beta prime kernel. It is available in the dataset \texttt{SnowGR} as part of the \texttt{MosaicData} package. The data consists of 119 observations of snowfall collected in Grand Rapids, MI, in December 1983. The data was originally obtained from NOAA. 

All of the kernels are reasonably smooth and peak around the right area. However, the GE kernel follows the shape of the data a little more closely than the Gamma and RIG kernels.
The approximate MISE for the snow data were $2.630\times10^{-6}$, $4.584\times10^{-6}$, $3.152\times10^{-6}$, $5.000\times10^{-6}$, and $6.097\times10^{-6}$ respectively under the GE, GE2, Gam1, Gam2, and RIG kernels. The smallest MISE was achieved by the first GE kernel, followed by Gam1 kernel. Figure \ref{fig:snow_app} shows the kernel density estimates alongside the histogram for the snowfall data. As with the previous example, we visually observe that GE and GE2 better capture the data's features. GE displays slightly more local variation, while GE2 remains smoother, yet still peaks at the appropriate points.

\begin{figure}[ht!]
	\includegraphics[width=18cm]{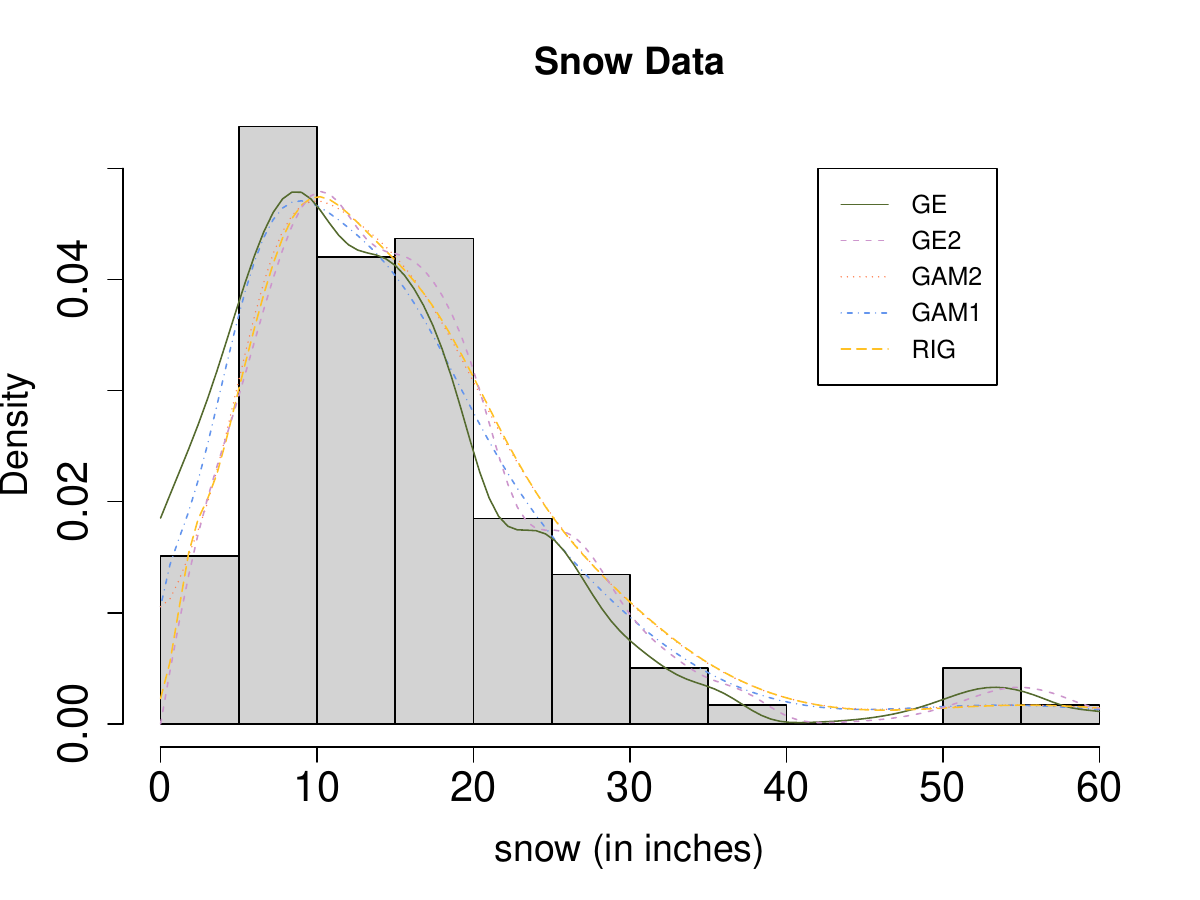} 
	\caption{Kernel density estimates and histogram for the snow data.}
\end{figure}\label{fig:snow_app}

\section{Concluding remarks}\label{sec:conclusion}

This paper introduced a new asymmetric kernel density estimator for positive continuous data based on the generalised exponential (GE) distribution. By exploiting the tractable structure of the GE model, the proposed KDE avoids the use of special functions such as the gamma function required by the gamma KDE, while preserving comparable flexibility in shape and behaviour. As a result, the GE KDE offers a simpler yet effective alternative for density estimation on the positive real line, with advantages for both analytical derivations and practical implementation. Beyond its computational simplicity, the GE KDE enriches the class of asymmetric kernels available for positive data, an important consideration given that different kernels can exhibit distinct asymptotic bias and variance properties. We provide a detailed theoretical analysis of the proposed estimator, deriving explicit expressions for its asymptotic bias and variance and formally establishing the order of the remaining terms. This level of rigor addresses gaps in the existing literature, where such remainder terms are not always fully characterised. A second GE kernel was also introduced, and we show that it achieves the optimal mean integrated squared error, an optimality result that could not be established for the first GE kernel.

The practical relevance of the GE KDEs is further supported by numerical studies using simulated and real data sets, which demonstrate that the proposed estimator performs competitively with well-established asymmetric KDEs. These results underscore the value of expanding the range of kernels tailored to positive data and illustrate that simple distributional choices can yield estimators with strong theoretical and empirical performance.

Overall, the GE KDEs constitute a meaningful addition to the asymmetric KDE framework, offering a balance between mathematical tractability, theoretical robustness, and practical effectiveness. Future research may explore data-driven bandwidth selection, extensions to multivariate settings, and applications in domains where positive-valued data are prevalent.

\section*{Acknowledgements}
This publication has emanated from research conducted with the financial support of Taighde Éireann – Research Ireland under Grant 18/CRT/6049. For the purpose of Open Access, the authors have applied a CC BY public copyright licence to any Author Accepted Manuscript version arising from this submission.

\bibliography{refs.bib}

@article{iss2016,
	author = {Aziz Issaka},
	journal = {The Ramanujan Journal},
	pages = {291–302},
	title = {On {R}amanujan’s inverse digamma approximation},
	volume = {39},
	year = {2016}}

@article{ig-dist,
	author = {Antonio Sanhueza and V{\'\i}ctor Leiva and N. Balakrishnan},
	journal = {Metrika},
	pages = {31-49},
	title = {A new class of inverse Gaussian type distributions},
	volume = {68},
	year = {2008}}

@book{kendall,
	author = {M.G. Kendall},
	edition = {4th},
	publisher = {Griffin},
	title = {Advanced Theory of Statistics: Volume 1},
	year = {1948}}

@article{geewan2018,
	author = {Gery Geenens and Craig Wang},
	journal = {Journal of Computational and Graphical Statistics},
	pages = {822-835},
	title = {Local-likelihood transformation kernel density estimation for positive random variables},
	volume = {27},
	year = {2018}}

@article{gee2021,
	author = {Gery Geenens},
	journal = {Annals of the Institute of Statistical Mathematics},
	pages = {953–977},
	title = {Mellin–{M}eijer kernel density estimation on {$\mathbb R^+$}},
	volume = {73},
	year = {2021}}

@article{funhir2024,
	author = {Benedikt Funke and Masayuki Hirukawa},
	journal = {Journal of Nonparametric Statistics},
	pages = {994-1017},
	title = {Density derivative estimation
using asymmetric kernels},
	volume = {36},
	year = {2024}}

@article{kak2022,
	author = {Yoshihide Kakizawa},
	journal = {Journal of Multivariate Analysis},
	pages = {104834},
	title = {Multivariate elliptical-based {B}irnbaum–{S}aunders kernel density estimation for nonnegative data},
	volume = {187},
	year = {2022}}

@article{hirsak2015,
	author = {Masayuki Hirukawa and Mari Sakudo},
	journal = {Journal of Nonparametric
Statistics},
	pages = {41-63},
	title = {Family of the generalised gamma
kernels: a generator of asymmetric kernels for nonnegative data},
	volume = {27},
	year = {2015}}

@article{maretal2013,
	author = {C Marchant and K Bertin and V Leiva and H Saulo},
	journal = {Computational Statistics and Data Analysis},
	pages = {1-15},
	title = {Generalized {B}irnbaum–{S}aunders kernel density estimators and an analysis of financial data},
	volume = {63},
	year = {2013}}

@article{kak2021,
	author = {Y. Kakizawa},
	journal = {Computational Statistics and Data Analysis},
	pages = {107249},
	title = {A class of {B}irnbaum-{S}auders type kernel density estimators for nonnegative data},
	volume = {161},
	year = {2021}}

@book{lehmann,
	author = {E.L. Lehmann},
	edition = {1st},
	publisher = {Springer},
	title = {Elements of Large-Sample Theory},
	year = {1999}}

@book{silverman,
	author = {Bernard W. Silverman},
	edition = {1st},
	publisher = {Chapman and Hall},
	title = {Density Estimation for Statistics and Data Analysis},
	year = {1986}}

@article{bs-ln,
	author = {Xiaodong Jin and Janusz Kawczak},
	journal = {Annals of Economics and Finance},
	pages = {103-124},
	title = {Birnbaum-{S}aunders and Lognormal Kernel Estimators for Modelling Durations in High Frequency Financial Data},
	volume = {4},
	year = {2003}}

@article{inv-gamma,
	author = {Yoshihide Kakizawa and Gaku Igarashi},
	journal = {Journal of the Korean Statistical Society},
	pages = {194-207},
	title = {Inverse gamma kernel density estimation for nonnegative data},
	volume = {46},
	year = {2017}}

@article{beta-prime,
	author = {Elif Er{\c c}elik and Mustafa Nadar},
	journal = {Communications in Statistics - Theory and Methods},
	pages = {325-342},
	title = {Nonparametric density estimation based on beta prime kernel},
	volume = {49},
	year = {2020}}

@article{general-gamma,
	author = {Gaku Igarashi and Yoshihide Kakizawa},
	journal = {Journal of Nonparametric Statistics},
	pages = {598-639},
	title = {Generalised gamma kernel density estimation for nonnegative data and its bias reduction},
	volume = {30},
	year = {2018}}

@article{chen,
	author = {Song Xi Chen},
	journal = {Annals of the Institute of Statistical Mathematics},
	pages = {471-480},
	title = {Probability Density Function Estimation using Gamma Kernels},
	volume = {52},
	year = {2000}}

@article{scaillet,
	author = {O. Scaillet},
	journal = {Journal of Nonparametric Statistics},
	pages = {217-266},
	title = {Density Estimation using Inverse and Reciprocal Inverse Gaussian Kernels},
	volume = {16},
	year = {2004}}

@article{rosenblatt,
	author = {Murray Rosenblatt},
	journal = {The Annals of Mathematical Statistics},
	pages = {832-837},
	title = {Remarks on Some Nonparametric Estimates of a Density Function},
	volume = {27},
	year = {1956}}

@article{parzen,
	author = {Emanuel Parzen},
	journal = {The Annals of Mathematical Statistics},
	pages = {1065-1076},
	title = {On estimation of a probability density function and mode},
	volume = {33},
	year = {1962}}

@article{betaexp,
	author = {Saralees Nadarajah and Samuel Kotz},
	journal = {Reliability Engineering and System Safety},
	pages = {689-697},
	title = {The beta exponential distribution},
	volume = {91},
	year = {2006}}

@book{handbook_ctd_frac,
	author = {Annie Cuyt and Vigdis Brevik Petersen and Brigitte Verdonk and Haakon Waadeland and William B. Jones},
	publisher = {Springer},
	title = {Handbook of Continued Fractions for Special Functions},
	year = {2008}}

@book{polygamma,
	author = {H. M. Srivastava and Junesang Choi},
	publisher = {Elsevier},
	title = {Zeta and q-Zeta Functions and Associated Series and Integrals},
	year = {2012}}

@article{ge,
	author = {Rameshwar D. Gupta and Debasis Kundu},
	journal = {Journal of Statistical Planning and Inference},
	month = {March},
	pages = {3537-3547},
	title = {Generalized exponential distribution: Existing results and some recent developments},
	volume = {137},
	year = {2007}}

\end{document}